\numberwithin{equation}{section}
\newcommand{\ii}{{\rm i}}
\newcommand{\ee}{{\rm e}}
\newcommand{\x}{{\rm x}}
\newcommand{\y}{{\rm y}}
\newcommand{\z}{{\rm z}}
\newcommand{\I}{1\!\!1}
\newcommand{\vol}{{\rm vol}}
\newcommand{\ad}{{\rm ad}}
\newtheorem{thm}{Theorem}
\newtheorem{lemma}[thm]{Lemma}
\theoremstyle{definition}
\newtheorem{defn}[thm]{Definition}
\begin{document}

\title{Generally covariant dynamical reduction models and the Hadamard condition}

\author{Benito A. Ju\'arez-Aubry}
\email{benito.juarez@correo.nucleares.unam.mx}
\affiliation{Departamento de Gravitaci\'on y Teor\'ia de Campos \\Instituto  de  Ciencias  Nucleares,  Universidad  Nacional Aut\'onoma de M\'exico, \\A. Postal 70-543, Mexico City 045010, Mexico.
}

\author{Bernard S. Kay}
\email{bernard.kay@york.ac.uk}
\affiliation{Department of Mathematics, University of York, York YO10 5DD, UK.}

\author{Daniel Sudarsky}
\email{daniel.sudarsky@nucleares.unam.mx}
\affiliation{Departamento de Gravitaci\'on y Teor\'ia de Campos \\Instituto  de  Ciencias  Nucleares,  Universidad  Nacional Aut\'onoma de M\'exico, \\A. Postal 70-543, Mexico City 045010, Mexico. \vspace*{0.5cm}
}

\date{V2: \today}

\begin{abstract} 

This is an author-created, un-copyedited version of an article published in Phys.Rev. {\bf D} 97 (2018) no.2, 025010. Errors and ommissions in this version are the sole responsability of the authors. The published version is available online at DOI:10.1103/PhysRevD.97.025010. \\

We provide general guidelines for generalizing dynamical reduction models to curved spacetimes and propose a class of generally covariant relativistic versions of the GRW model. We anticipate that the collapse operators of our class of models may play a r\^ole in a yet-to-be-formulated theory of semiclassical gravity with collapses. We show explicitly that the collapse operators map a  dense domain of states that are initially Hadamard to final Hadamard states --  a property that we expect will be needed for the construction of such a semiclassical theory. Finally, we provide a simple example in which we explicitly compute the violations in energy-momentum due to the state reduction process and conclude that this violation is of the order of a parameter of the model -- supposed to be small. 
\end{abstract}

\pacs{03.65.Ta, 04.20.Cv, 04.62.+v}

\maketitle

\section{Introduction}
\label{sec:Intro}

Quantum theory, despite being empirically extraordinarily successful, continues to be beset by the so-called \textit{measurement problem}.   In the Schr\"odinger picture, the quantum state of an undisturbed system is supposed to evolve unitarily according to the Schr\"odinger equation, $\dot\psi = -\ii \widehat{H}\psi$, where $\widehat H$ is the quantum Hamiltonian. Yet, when the measurement of an observable $A$ is performed, the state is supposed to undergo a process called reduction\footnote{\label{ftnt1} Here we take $\widehat{A}$ to have nondegenerate discrete spectrum.   von Neumann \cite{Neumann} calls the unitary and reduction processes Rule II and Rule I;  below we shall follow Penrose \cite{Penrose1, Penrose1a, Penrose:1996cv, Penrose2} in calling them $U$ and $R$ respectively.} whereby it changes abruptly from the pre-measurement state to an eigenstate of the operator $\widehat{A}$, $\psi\mapsto \alpha_i$, where $i$ is an index set element, with probability $|\langle \psi | \alpha_i \rangle|^2$, where $\alpha_i$ is an eigenvector of $\widehat{A}$ with eigenvalue $a_i$, which is in turn taken to be the outcome of the measurement.

As already discussed by the founders of the subject (see the account in \cite{WheelerZurek}) culminating in von Neumann's book \cite{Neumann} and in numerous subsequent textbooks (of which we mention for example \cite{Jauch, Jammer, Peres, Omnes, BuschEtAl, Landsman} -- see also the collection of articles by Bell \cite{Bell} and also \cite{Bassi:2003gd}, to which we also refer later for collapse models) the measurement problem has its origin in the fact that the theory does not specify what a measurement is and therefore it is never completely clear which of the two rules, $U$ or $R$ (see footnote \ref{ftnt1}), should be applied in any particular situation.  If, in a situation that might be considered a measurement, one chooses to model the measurement apparatus as a second quantum system coupled to the measured quantum system and applies $U$, one predicts the existence of (macroscopic) superposition states, in which the apparatus is entangled with the measured system, which have no counterpart when one, instead, applies $R$.    Although, as explained long ago by Heisenberg and by von Neumann (in \cite{Neumann}), the final prediction will not differ significantly for most practical purposes, there remains an unsatisfactory vagueness (see, {\it e.g.}, Bell's account \cite{BellAgainst}).  Also, the seeming presence, on the former choice, of macroscopic superposition states (as illustrated by Schr\"odinger's cat) is troubling and seems to be at odds with our classical understanding of macroscopic systems.  Furthermore   in  certain   contexts,   such as  cosmology  and black hole physics,  the problem cannot be easily bypassed.

Here we focus on some specific technical issues within one set of proposals for resolving some aspects of the measurement problem. Namely the so-called {\it (spontaneous) collapse} or {\it dynamical reduction} models.  

The first suggestion of a  dynamics for wave function collapse was by Bohm and Bub 
\cite{BOHM:1966zz}.  This was followed by a specific proposal to describe the collapse as a dynamical process by Pearle in \cite{Pearle:1976ka} and  \cite{Pearle:1979vm}, which  however  faced   the so-called  `trigger problem' and the `preferred basis problem'.    These   were successfully   resolved  in  the proposal by Ghirardi, Rimini and Weber \cite{Ghirardi:1985mt}.

This proposal attracted the attention of Bell who, in
\cite{BellJumps}, formulated it in terms of a stochastic time-evolving wave function\footnote{In \cite{Ghirardi:1985mt}, the theory was formulated in terms of a deterministic time-evolving density operator, $\widehat{\rho}(t)$.   In modern terminology what Bell supplied was one particular `unraveling' of the master equation that determines $\widehat{\rho}(t)$.} (See also \cite{Ghirardi:1987ns}.) Bell also argued that the theory contains enough to remove the concerns that a relativistic collapse theory may be impossible. As is customary, we shall refer to the theory of {\cite{Ghirardi:1985mt, BellJumps} as the GRW theory.

A major issue with the original GRW collapse theory is that it does not incorporate the Bose-Einstein (or Fermi-Dirac symmetry) (or antisymmetry) needed to deal with identical particles.  This  was   first  fixed  in  the  CSL  (continuous spontaneous localization) model \cite{Pearle:1988uh},\footnote{As emphasized in the recent paper \cite{NewGR}, the difference between the GRW and CSL models should, from a physical point of view, be regarded as a relatively minor technicality; as is explained there, what really matters is the choice of which observables are ``made sharp'', or, in the language we use in the sequel here, of which are the relevant {\it collapse generators}.} and independently by Di\'osi in \cite{Diosi:1988uy}.

An important result from that period, discussed in \cite{GisinResponds}, is a  condition on the viable collapse  models. The point is that the time-evolving density operator characterizing the  modified   evolution of  a  statistical ensemble of systems must be determined by a master equation of the GKS-Lindblad form \cite{GKS, Lindblad} if the model is to avoid the possibility of superluminal communication. Once possessing a satisfactory collapse dynamics, it is sometimes useful to restrict attention just to the associated  GKS-Lindblad master equation   for the  corresponding time-evolving density operator.  We refer to this as the \textit{density-operator formulation of the theory}. 

In fact as initially  shown in \cite{Ghirardi:1989cn}, given  any GKS-Lindblad master equation for a  time-evolving density operator $\widehat{\rho}(t)$, it is possible to find\footnote{Strictly, as explained in \cite{Ghirardi:1989cn} this statement holds within the CSL formalism and not always in the GRW formalism.}  a stochastic dynamical rule determining a  stochastically time-evolving wave function, $\psi_s(t)$ -- $s$ standing for the relevant stochastic parameters -- such that $\widehat{\rho}(t)$ is the average over the ensemble labelled by $s$ of $|\psi_s(t)\rangle\langle\psi_s(t)|$.  The stochastic dynamical rule that determines such an ensemble of time-evolving wave functions is called an \textit{unraveling} of the time-evolving density operator or equivalently of the master equation that determines it.  (The notion is due originally to Carmichael \cite{Carmichael}.)  Sometimes, one also refers to $\psi_s(t)$ as an unraveling of $\widehat{\rho}(t)$. However, a crucial point is that  there are, in general, multiple  possibile unravelings of the same GKS-Lindblad  equation,   and, in the current work,  we  consider  the formulation of the theory in  terms of a stochastically time-evolving wave vector to be more fundamental.  We will refer to this as the \textit{wave-vector formulation of the theory}.

More modern developments include substantial experimental programs.  For a
recent review,  see \cite{Bassi:2012bg}.

In relation to the measurement problem that we discussed at the outset, the main positive feature of the GRW and CSL models is that they replace $U$ and $R$ by an objective set of (stochastic) rules that mimics a particular combination of the applications of $U$ and $R$ and that one can show that, in many cases has the effect of eliminating the troubling macroscopic superposition states.   

These positive features of collapse models come at a price:  Dynamical collapse models are not without conceptual difficulties of their own, such as, for example, the `tail problem' but (see, {\it e.g.}, \cite{Wallace}) these seem to be resolvable.    Moreover the rules involve certain parameters,  functions, and stochastic processes, which are partly fixed by pragmatically tuning them so as not to conflict with any known phenomena, but which retain a high degree of arbitrariness.  However, our attitude to these models should perhaps be that they are just stopgaps which will one day be replaced by a more fundamental theory in which the several presently \textit{ad hoc} and partly arbitrary parameters and functions become calculable in terms of existing fundamental constants.   In particular there are reasons to think that quantum gravity, when understood better than we presently do, will do this job  -- perhaps along the lines adumbrated by Penrose in \cite{Penrose1, Penrose1a, Penrose:1996cv, Penrose2} or by Di\'osi in \cite{Diosi:1988uy, DiosiQGC} and/or perhaps in line with the `matter-gravity entanglement hypothesis' of one of us (see \cite{Kay1998a, Kay1998b, KayAbyaneh, Kay:2015csa} and references therein). In the meantime, by studying the predictions of existing dynamical reduction models, we hope to be able to learn lessons and make testable predictions which may one day be confirmed by such more fundamental theories. 

An important drawback of early dynamical reduction models is that they are non-relativistic, but the development of these early models quickly led to enquiries   about  their possible relativistic  generalizations.

A  valuable   concrete   proof of  existence of  collapse  models   compatible  with   special relativity is provided  by \cite{Tumulka2006}, although   that specific   example  only deals  with   situations  involving a  fixed  finite  number of non-interacting quantum particles, and it is not  clear how  it  might   be generalized to fully  quantum field theoretical settings.

Earlier   considerations concerning the     general requirements  such theories  must possess appeared in \cite{Pearle62, Grassi}. The constructive exchanges  in \cite{Aharonov:1984zzb, MyrvoldBecoming, MyrvoldPeaceful}, together with the  introduction  of an auxiliary ``pointer  field" into the dynamical reduction models  in \cite{Pearle:2005, Pearle:2008} eventually lead to  the development  of special   relativistic  versions  of  collapse theories  fully adapted to the context  quantum fields  \cite{Bedingham:2010hz, Pearle:2014tda}. In fact a recent work  \cite{MyrvoldDegFree} argues  that    viable  relativistically covariant collapse theories  must make use of such  kind of non-standard fields as the pointer field, alluded to above. In this regard,  we should  point out that  the  possibility of having the   collapse dynamics  tied to the   curvature of  spacetime,  as    is  considered in \cite{Modak:2014qja, Modak:2014vya, Bedingham:2016aus, Modak:2016uwr}, might  allow  one  to bypass  such a conclusion.  

We remark that the inherent non-locality that  must be  present in these  models, and discussed  in \cite{Aharonov:1984zzb, MyrvoldBecoming, MyrvoldPeaceful},  is  nevertheless    safe   regarding   faster than light communication ({\it i.e.}  the models  do not allow  it).

From a philosophical standpoint, addressing the tension between the locality of special and general relativity and the nonlocal aspects of global quantum states is relevant for dynamical reduction models, and a range of positions appear in the literature. 

We  might  be motivated by Penrose's work \cite[p. 446]{Penrose1} to contemplate  a radical revision of special relativity as being possibly  necessary before quantum collapse can be made consistent with relativity. On the other side, Kochen \cite{Kochen:2017gay} has recently argued that, when the concepts of quantum theory are appropriately conceptualized, the theory contains no nonlocal features. 

To exemplify the latter posture, one could be led to argue that the measurement of a property of one subsystem of an EPR pair, say the spin along the z-direction of one of two spacelike separated particles, should not be understood as a measurement on the other subsystem. This position does away with the need for essential (non-epistemic) nonlocalities. A difficulty with such a sort of instrumentalist viewpoint is that it fails to define what a measurement is. For example, it does not explain why in the above example one should not consider a measurement of the spin of particle one also as a measurement on particle two, while the  observation of  a black  dot on the screen  of a  Stern-Geralch experiment presumably does  provide a   measurement of the   corresponding spin component of the   quantum particle. The only difference we can see is that in the  latter case locality would not be compromised, while  it is in the former  it clearly would be.

The position that we take is symptathetic to the posture of Myrvold \cite{Myrvold} and Maudlin \cite{Maudlin}, which argues for a clear distinction between ``action at a distance" and nonlocality. The point is that dynamical reduction models can incorporate   nonlocal aspects of physics in the form of, say, nonlocal correlations without the need to ascribe an  asymmetric causal relation to the spatially-correlated outcomes of experiments, such asymmetry being the only thing that is forbidden by Lorentz invariance. We  point the reader to the above cited references for in-depth  discussions of these conceptual subtleties, which, after all, are not the central focus of the present paper.

From a physics standpoint, a serious complication for special relativistic models is that any theory that produced particles out of the Minkowski vacuum state would lead to infinite particle creation, due to the Poincar\'e invariance of the vacuum. This obstruction was tamed by techniques due to Pearle and Bedingham, with the aid of the aforementioned pointer field (see \cite{Pearle:2014tda} and \cite{Bedingham:2010hz} respectively). On the other hand, in \cite{Tumulka:2005ki} Tumulka takes the approach that a flash ontology of collapse theories might lead to a Lorentz-invariant formulation of dynamical reduction models (without the aid of external auxiliary fields) but, while some progress is made in this direction, a concrete relativistic model is not formulated explicitly in \cite{Tumulka:2005ki}.

In the aforementioned works \cite{Pearle:2014tda, Bedingham:2010hz, Tumulka:2005ki}, a fundamental ingredient in the construction of the models is the choice of a number operator (smeared as a number density or mass density operator) acting on the Fock space of the quantum field theory. Hence, from the outset those formulations rely on a field representation and the choice of a vacuum state, which are structures that are unnatural in curved spacetimes in the absence of symmetries. Not only that, but even in Minkowski space one might, {\it e.g.}, decide to count Fulling-Rindler particles rather than Minkowski particles.

A particularly relevant issue to us here is that dynamical reduction models generically violate conservation of energy, on which we remark the parameters of each model can be chosen to be consistent with the phenomenological constraints on energy violation (along with the constraints on all other phenomena).

When moving further to a general relativistic framework, as in the work of Bedingham et al.\ \cite{Bedingham:2016aus}, which generalizes \cite{Bedingham:2010hz} to curved spacetimes, one expects that the theory will violate the conservation of the renormalized energy-momentum tensor, but if one wants a modified version of  the semiclassical Einstein equations\footnote{By (unmodified) semiclassical Einstein equations we mean a replacement for the classical Einstein equations, $G_{ab}=8\pi G_\text{N}T_{ab}$ in which the classical energy momentum tensor gets replaced by the expectation value, $\langle T_{ab}\rangle$, of a renormalized quantum energy momentum tensor in a suitable (Heisenberg) state.} to hold, one will at least need the theory be defined so that an initial state, for which the expectation value of the renormalized energy-momentum tensor is finite, avoids evolving, upon collapse, into a state for which it has an infinite expectation value.   In the case of a linear scalar quantum field model, to which we shall restrict our attention, this essentially reduces to needing the property that any given initial Hadamard state will evolve dynamically into a state which is also Hadamard.\footnote{Actually, this property can be relaxed somewhat:  In Section \ref{sec:Result}, we in fact content ourselves with showing that, for a suitably large class of Hadamard states, a state in this class will evolve dynamically into another state in this class -- and will therefore, in particular, itself be Hadamard.}

Here, we recall that, in QFT in CS (see \cite{KayEncyc} for a brief general introduction to the topic), the conventional wisdom  is that there does not exist a preferred vacuum state for the theory in the absence of special spacetime symmetries or asymptotic behaviour.   Instead, for linear field theories, there exists a class of physically admissible states, called {\it Hadamard states}, for which the field two-point function has an appropriate short-distance behaviour so that the expectation value of the renormalized energy-momentum tensor of the theory is finite.   The earliest formulations of the Hadamard property culminated in the rigorous definition provided by one of us and Wald \cite{Kay:1988mu} and it is this definition that we shall use throughout this paper. (For more about the Hadamard condition and the renormalized energy-momentum tensor see, {\it e.g.}, the articles \cite{Wald78, Decanini:2005gt} and the standard monographs \cite{Birrell:1982ix, Wald:1995yp}.)

While, in the present paper, our main interest in the Hadamard condition is that it is essentially equivalent to the existence of a finite renormalized energy-momentum tensor, let us point out, in passing, that several interesting results have recently been proven about Hadamard states.  Most of these rely on a more recent, equivalent and very powerful, alternative definition of the Hadamard condition, now known as the {\it microlocal spectrum condition}, which was provided by Radzikowski \cite{Radzikowski:1996pa, KhavkineMoretti} in 1996 using notions from microlocal analysis and which involves a generalization of the spectrum condition (see, {\it e.g.}, \cite{Haag:1992hx}) in flat-spacetime QFT.   Amongst these is the result \cite{Fewster:2011pe} that, under certain QFT structural conditions, which are satisfied, {\it e.g.}, by the Klein-Gordon theory \cite{Fewster:2011pn}, then (in accordance with the conventional wisdom mentioned above) there exists no prescription for defining a state which is covariant in a suitable sense (for which, again, see \cite{Fewster:2011pe} and references therein) under changes from one spacetime to another and which has the Hadamard property in all of them.   This result helps us to understand (see \cite{Fewster:2013lqa}) why attempts (see, {\it e.g.}, the recent paper  \cite{Afshordi:2012jf}) to give general constructions for preferred states in general spacetimes necessarily fail to have the Hadamard short-distance behaviour, and thereby yield unphysical results.  The Hadamard property is also crucial in the definition of local Wick polynomials and time-ordered products in QFT in CS \cite{Hollands:2001nf, Hollands:2001fb}, as well as in the definition of particles in general curved spacetimes situations \cite{Louko:2007mu, Juarez-Aubry:2014jba, Hodgkinson:2011pc}.

We wish now to clarify the main purpose of the present paper as being to answer the question whether, for a linear scalar field theory model (namely, the Klein-Gordon theory), there can exist a generally covariant quantum field theoretic model that fits into Bedingham {\it et al.}'s framework for QFT in CS with collapses with the property that suitable initial Hadamard states evolve to Hadamard states at later times. We will answer this question in the affirmative.   In the course of doing so, we will give general guidelines towards the construction of such theories.

We consider that the results of this paper are relevant towards putting relativistic dynamical reduction models on a firm theoretical foudation, and can {\it en passant} also lead to further progress in the line of work \cite{Perez:2005gh, Bengochea:2014pka, Leon:2015eza, Landau:2012a, Josset:2016vrq}, which is aimed at understanding the formation of structure in quantum cosmological (inflationary) models, and also in the line of work \cite{Bedingham:2016aus, Okon:2014dpa, Modak:2014qja, Modak:2014vya}, in the physics of quantum black holes, offering a possible resolution of the information loss puzzle \cite{HawkingInfoLoss}. Additionally,  having dynamical reduction models, defined by {\it collapse operators} that map initial Hadamard states to final Hadamard states, would seem to be a prerequisite for a viable semiclassical theory of gravity that incorporates collapses.

The plan of the paper is as follows: In section \ref{sec:DynRed}, after a brief review of the necessary QFT in CS formalism, we review Bedingham's framework for a relativistic generalization of the GRW  and  CSL  models and discuss some lines for their further development as generalized to curved spacetimes in \cite{Bedingham:2016aus}; in particular, from a set of physically motivated conditions, we arrive at a  generalization of the GRW model to a generally covariant QFT in CS model.   

In section \ref{sec:Hadamard} we prepare the ground for the main result of the paper by recalling the necessary details from the definition of the Hadamard condition in \cite{Kay:1988mu} and how Hadamard states yield a finite value for the expectation value of the renormalized energy-momentum tensor. 

In section \ref{sec:Result} we then prove the main results of this paper. Namely, that there exist certain collapse operators that define the evolution of generally covariant dynamical reduction models for a real Klein-Gordon field for which the dynamical law guarantees that an initial Hadamard state, which belongs to a suitable class of states (which is a dense domain in the relevant Hilbert space), evolves into a final Hadamard state, also inside this class. This is the content of theorems \ref{MainThm} and \ref{MainThm2}. We then provide a simple example in which we compute the expectation value of the renormalized energy-momentum tensor in the Hadamard state resulting from such a collapse, and estimate the difference in the expectation values of the energy-momentum tensor in the initial and final states. We show that an appropriate choice in a parameter of the model yields a small change in the  renormalized energy-momentum. 

Our final remarks appear in section \ref{sec:Conclusion}.

The conventions of this paper are as follows: By a spacetime, $(M,g)$, we mean a real four-dimensional, connected (Hausdoff, paracompact) $C^\infty$ differentiable manifold, $M$, equipped with a Lorentzian metric $g$. We restrict our interest to those spacetimes which are time-orientable and assume a choice of time-orientation has been made and we further restrict our spacetimes to be globally hyperbolic \cite{BernalSanchez1, BernalSanchez2}. Our metric, $g$, has signature $(-,+,+,+)$.     For a subset $S\subset M$, $J^+(S)$ denotes the causal future of $S$ and $J^-(S)$ its causal past.

We use units in which $c = \hbar = k  = 1$ and use the symbol $G_\text{N}$ to denote Newton's constant. Spacetime points are denoted by Roman characters ($\x, \y, \dots$).  Complex conjugation is denoted by an overline. Concrete operators on Hilbert spaces are indicated by capital letters surmounted with carets, such as $\widehat{A}, \widehat{B}, \dots$, while elements of an abstract non-commutative algebra are indicated by caret-free letters such as $A, B, \ldots$. The adjoint of a Hilbert-space operator, $\widehat A$, is denoted by ${\widehat A}^*$.  $O(x)$ denotes a quantity for which $O(x)/x$ is bounded as $x \to 0$.

\section{Quantum field theory and dynamical reduction models}
\label{sec:DynRed}

The idea that state reduction in quantum theory may have as a fundamental origin the quantum interactions between gravity and matter has been suggested by a number of authors -- here we mention the work of Di\'osi \cite{Diosi:1988uy, DiosiQGC} and Penrose \cite{Penrose1, Penrose1a, Penrose:1996cv, Penrose2}, as well as a proposal by one of us (the main early papers are \cite{Kay1998a, Kay1998b, KayAbyaneh}; see \cite{Kay:2015csa} for a review of recent work and further references) in which the entanglement between gravity and matter plays a fundamental r\^ole in the explanation both of state reduction and of entropy increase, {\it i.e.}, of the origin of the second law of thermodynamics.

Nevertheless, in the absence of a quantitative theory for the quantum interactions between gravity and matter, we feel it is a more realistic short-term goal to formulate relativistic dynamical reduction models in terms of a suitable version of semiclassical gravity with collapses where we expect precise mathematical questions can be formulated and answered.   This, in turn, will, need to be built on a prior understanding of QFT in CS in the presence of collapses.

In particular, we expect that we need to address new questions in QFT in CS related to the renormalizability of the energy-momentum tensor. As we mentioned in the introduction, it is well-known that, already in their non-relativistic versions, dynamical reduction models violate energy conservation, but these violations can be tuned to be small enough to be phenomenologically acceptable. Similarly, in the context of quantum field theory, dynamical reduction models will violate the conservation of energy momentum in the spacetime region where the state reduction occurs.  However, the situation seems more severe in quantum field theory, for one could imagine starting out in a Hadamard state and producing a post-collapse state that is not Hadamard. This would mean not only that energy momentum conservation is violated, but that it is violated by an infinite amount. It is the main purpose of this paper to show that this need not happen; it is possible to construct collapse models in a QFT in CS context for which the Hadamard form is indeed preserved on collapse.

The plan for the remainder of this section is as follows:  We first recall, in section \ref{QFT} in a concise way, the relevant details about the mathematical formulation of the Klein-Gordon theory in curved spacetimes.   Because this theory is linear, it is fully understood from a mathematical standpoint and thereby provides a simple and useful arena for testing dynamical reduction models in curved spacetimes. We then proceed, in section \ref{CovColl}, to recall (in section \ref{subsec:CovCollRev}) earlier work on how quantum field theory in a fixed spacetime can be modified to incorporate spontaneous collapse and then, in section \ref{subsec:choice}, we discuss  our own proposal for how this theory may be developed further in a way consistent with general covariance. 

\subsection{The Klein-Gordon theory in curved spacetimes}
\label{QFT}

On a given spacetime, $(M, g)$, the real Klein-Gordon field algebra, $\mathscr{A}$, is the $^*$-algebra with identity $\I$ generated by (smeared) fields of the form $\Phi(f)$, where $f \in C_0^\infty(M)$, satisfying the following properties: (i) $f \mapsto \Phi(f)$ is a linear map (linearity), (ii) $\Phi(f) = \Phi(f)^*$ (hermicity), (iii) $\Phi \left( \left( \Box - m^2 - \xi R \right)f \right) = 0$ (field equation) and (iv) $[\Phi(f), \Phi(g)] = -\ii E(f,g)\I$ (spacetime commutation relations), where $E = {\rm A} - {\rm R}$ is the advanced-minus-retarded fundamental solution, such that $\left( \Box - m^2 - \xi R \right) Ef = 0$.  Here, for a given test function, $f$, $\Phi(f)$ is to be interpreted as the integral over the  spacetime manifold of the (mathematically ill-defined) field at a point $\Phi(\x)$ times 
$f(\x)$ with respect to the natural volume element, $d\vol$, for the spacetime metric $g$.

We remark that, as is well known, the above spacetime commutation relations may be regarded as the result of imposing the usual canonical commutation relations on a Cauchy surface and evolving with some choice of global time function, {\it i.e.}, a globally-defined, real-valued function on the spacetime with the property that all the constant time surfaces are Cauchy; the result however being entirely independent of that choice.

A state, $\omega$, is a linear functional $\omega: \mathscr{A} \to \mathbb{C}$, which is normalized, $\omega(\I) = 1$, and positive, $\omega(A A^*) \geq 0$ for any $A \in \mathscr{A}$. 
It is fully specified by the specification of all the (smeared) \textit{n-point functions}, {\it i.e.}, of quantities of form $\omega(\Phi(f_1) \cdots \Phi(f_n))$.  Throughout this paper, we shall concentrate on a class of states which are quasi-free, which means that all the n-point functions are determined by the two-point function via the formula $\omega(\exp[\ii \Phi(f)]) = \exp[-\omega(\Phi(f) \Phi(f))/2]$.

The standard operator valued distributions and Hilbert space approach to quantum field theory can be recovered by performing the GNS construction, which produces out of the observable algebra, $\mathscr{A}$, and an algebraic state, $\omega$, a GNS triplet $(\pi, \mathscr{D} \subset \mathscr{H}, \Omega)$, where $\pi: \mathscr{A} \to \mathscr{L}(\mathscr{H})$ is a representation that maps algebra elements to linear operators acting on a dense subspace $\mathscr{D} $ of the Hilbert space $ \mathscr{H} $ and where 
$\Omega \in \mathscr{H}$ is the GNS cyclic vector 
({\it i.e.}, the set   span $\lbrace  \pi(A)\Omega, A\in   \mathscr{A}\rbrace $  is  dense  in $\mathscr{H}$). Let us call $\widehat{\Phi}(f) = \pi(\Phi(f))$. We have that
\begin{equation}
\omega(\Phi(f)\Phi(g)) = \langle \Omega | \widehat{\Phi}(f) \widehat{\Phi}(g) \Omega \rangle,
\label{KG2pt}
\end{equation}
which is the textbook expression for the Wightman two-point function (smeared in the test functions $f$ and $g$), $W(f,g)$.  Identifying  $\Omega$ with the vacuum state of the theory, we have that $\widehat{\Phi}(f) = \ii \widehat{a}({KEf}) - \ii \widehat{a}^*(KEf)$, where $(K, \mathcal{H})$ is the \textit{one-particle Hilbert space structure} associated to the quasi-free state $\omega$ (see Appendix A in \cite{Kay:1988mu}) and $\mathscr{H} = \mathscr{F}_s(\mathcal{H})$ is the symmetric Fock space over the one-particle Hilbert space $\mathcal{H}$. For $\xi\in \mathcal{H}$, $\widehat{a}(\xi)$ is the usual annihilation operator on $\mathscr{F}_s$ and $\widehat{a}^*(\xi)$ its adjoint, so that
$[\, \widehat{a}(\xi), \widehat{a}^*(\chi)]=\langle\xi|\chi\rangle\widehat{\I}$.

\subsection{Relativistic collapse}
\label{CovColl}

\subsubsection{Review of general formalism}
\label{subsec:CovCollRev}

Here we offer a brief description of, and further develop, the relativistic collapse theory which was formulated by Bedingham in \cite{Bedingham:2010hz} and generalized to the context of a fixed background curved spacetime in \cite{Bedingham:2016aus}.  

We shall assume here that our spacetime has a compact spatial section.   This will ensure that there is a globally preferred Hilbert space representation up to unitary equivalence. In fact, for such a spacetime, we know \cite{Verch, LudersRoberts} that the GNS representations of all pure quasi-free Hadamard states are unitarily equivalent and therefore we can work in a fixed one of these representations, say $\pi$ on $\mathscr H$. Also, if the spacetime has compact spatial section, the volume enclosed by an initial and a final Cauchy surface (at finite times) is finite, and a finite number of collapses occur in the state evolution from the former to the latter.\footnote{One can give arguments that our results should be extendable to spacetimes with non-compact spatial sections but this would involve us with further technicalities beyond those we are able to treat in the present paper.} We remark concerning terminology that, in Section \ref{sec:Result}, when a vector $\psi$ in $\mathscr H$ is such that the two-point function of the algebraic state, $\langle \psi | \pi ( \Phi(f) \Phi(g) ) \psi \rangle$, is Hadamard, we shall refer to $\psi$ as a \textit{Hadamard state vector}.

The basic idea is to adopt the Heisenberg picture throughout for the field operators -- {\it i.e.}\ we continue to define the field algebra as in Section \ref{QFT} -- but we assume, for a given global time function whose constant-time surfaces are Cauchy surfaces, that the state changes abruptly on certain constant-time Cauchy surfaces, $\Sigma$,\footnote{Due to these abrupt changes of the state, the Heisenberg evolution picture will {\it not} hold globally.} which contain certain randomly selected spacetime points $\x$ which we shall call \textit{spacetime collapse centers} (while the state does not change in between these abrupt changes). These spacetime collapse centers are supposed in \cite{Bedingham:2010hz} to occur according to a Poisson distribution with a fixed rate, $\mu$ (taken to be a new constant of nature) per unit spacetime volume. 

The rule for the change of state at such a Cauchy surface is taken to be of the general form of the GRW reduction rule:
\begin{align}
\psi_{\Sigma}\mapsto \psi_{\Sigma^+}  =N \widehat{L}_\x(Z) \psi_{\Sigma},
\label{RedRuleRel}
\end{align}
where $\psi_{\Sigma}$  denotes the state before the abrupt change and $\psi_{\Sigma^+}$ the state after it.  The collapse operator takes the general form: 
\begin{equation}
\widehat{L}_\x(Z) = \left(\frac{\pi}{2\alpha}\right)^{-1/4} \, \exp(-\alpha(\widehat{B}(\x) - Z\widehat{\I})^2)
\label{Lformal}
\end{equation}
and $N$ denotes the normalization factor 
\begin{equation}
N=\langle\widehat{L}_\x(Z) \psi_{\Sigma}|\widehat{L}_\x(Z)\psi_{\Sigma}\rangle^{-1/2}.
\end{equation}

Here, the collapse generator, $\widehat{B}(\x)$, is taken to be a self-adjoint operator constructed -- in a way which it will be our purpose to specify below -- out of local fields centered around the spacetime collapse-center $\x$; $\alpha$ will be called here the \textit{collapse parameter} of the theory, taken, like $\mu$, to be a fundamental constant of nature with the same dimensions as $\widehat{B}(\x)^{-2}$;  $Z$ is a real scalar constant -- which we shall sometimes call the \textit{field-space collapse center} below -- assumed to be chosen (anew at each collapse as indicated by the subscript, $\x$, on $\widehat{L}_\x$) at random with a probabilty density
\begin{equation}
\label{PROB}
dP =\langle \widehat{L}_\x(Z)\psi_{\Sigma}|\widehat{L}_\x(Z)\psi_{\Sigma}\rangle dZ.
\end{equation}
When there might be more than one spacetime collapse center, $\x$ under consideration, we will write $Z_\x$ to denote the field-space collapse center (which is randomly chosen according to the above prescription) at the spacetime collapse center $\x$.   On the other hand, we shall sometimes abbreviate the operator $L_{\x}(Z)$ by $L_{\x}$ in order to simplify our notation.
The fact that $dP$ is a probability density follows from the easily verified relation 
\begin{equation}
\int \! dZ \, \widehat{L}_\x(Z)^2 = \widehat{\I}.
\end{equation}

If the spectrum of $\widehat{B}(\x)$ is the whole real line, $\widehat{L}_\x(Z)$ will have the effect of putting the state in an approximate eigenstate of the collapse operator $\widehat{B}(\x)$ with approximate eigenvalue $Z$, thus tending to localize the state in `field space' in analogy with the way that the non-relativistic GRW model approximately localizes the non-relativistic wave function in position space.\footnote{If the spectrum of $\widehat{B}(\x)$ is only a proper subset of the real line, the localization will presumably do the best it can, but this issue seems deserving of further investigation.} We remark that it is the field-space collapse center, $Z$, here which plays an analogous mathematical r\^ole here to the center of the collapse event, $\bf z$, of nonrelativistic GRW, while what we have called here a spacetime collapse center, $\x$, plays an analogous mathematical r\^ole to one of the random times, $t_i$, at which collapses happen in nonrelativistic GRW.

Consider the state of the system at some initial Cauchy surface, $\Sigma_i$, to be given.  It can then be shown \cite{Bedingham:2016aus} that, if the following microcausality conditions hold for all space-like separated $\x$ and $\y$, 
\begin{equation}
\label{colmicro}
\hbox{(i)}\quad \! [\widehat{L}_\x(Z),\widehat{L}_\y(Z')]= 0 \quad \hbox{and (ii)}\quad \! [\widehat{L}_\x(Z),\widehat{\cal H}_{\rm int}(\y)]= 0, 
\end{equation}
equivalently,
\begin{align}
\label{MicroCaus}
\hbox{(i')}\quad \left[\widehat{B}(\x), \widehat{B}(\y)\right] =0 \quad\text{and (ii')}\quad \left[\widehat{B}(\x), \widehat{\cal H}_{\rm int}(\y)\right] =0,
\end{align}
then, given a set of spacetime collapse centers $\{x_j|\Sigma_f \succ {x_j}\succ \Sigma_i\}$ (with labels $j= 1,\ldots,n$, corresponding to an arbitrary total ordering, which respects the causal ordering of the spacetime) occurring between `initial' and `final' Cauchy surfaces $\Sigma_i$ and $\Sigma_f$, 
and given the set of all field-space collapse centers, $Z_{\x_j}$, chosen at these spacetime collapse centers, $\{Z_{\x_j}|\Sigma_f \succ {\x_j}\succ \Sigma_i\}$, the state dynamics leads to an unambiguous change of state between $\Sigma_i$ and $\Sigma_f$. Moreover, the probability rule that specifies the joint probabilities  of full sets of spacetime collapse centers $\{Z_{\x_j}|\Sigma_f \succ \x_j\succ \Sigma_i\}$  does not depend on the choice of global time function (assumed to have $\Sigma_i$ and $\Sigma_f$  as two of its constant-time surfaces) for which the intermediate Cauchy surfaces, on which the collapses happen, are constant-time surfaces provided  that the law  characterizing  the   probability distribution of the spacetime locations itself does not select a preferred time-function.  A  Poisson distribution is one possibility for such a rule.

Therefore, provided the equation \eqref{colmicro}, or equivalently \eqref{MicroCaus}, holds and  given that  our dynamical and (statistically formulated) collapse rules make no reference to any preferred global time function, the whole set of (overall statistically formulated) dynamical rules will be  covariant.

\subsubsection{Our proposal for the choice of collapse generators.}
\label{subsec:choice}
  
In this paper, we take the point of view that the choice of the collapse operator should be made according to the following guiding principles:

\begin{enumerate}
\item Localization: In quantum field theory, processes are not sharply localized at space-time points, but rather smeared in compact spacetime regions.

\item Causality: In bosonic quantum field theory, any collapse operator must respect Einstein causality through the CCR condition of the field algebra.

\item Covariance: Any collapse operator must be constructed in a general covariant way and avoid Lorentz violations.

\end{enumerate}

We remark that, in the non-relativistic theory, locality is less of a problem thanks to the existence of a position operator.   The guiding principles above will compensate for the lack of such a position operator in a relativistic context.  In any case, these principles for relativistic QFT in CS seem to allow one to construct a large class of possible models as follows.  First, if we have a Poisson distribution that selects spacetime collapse centers $\{\x_i\}_I$, we let $f_{\x_i} \in C_0^\infty(M)$ be smooth functions of compact support peaked around each one of the spacetime collapse centers ${\rm x}_i \in M$.\footnote{An obvious issue that arises here is that we need to define what we mean by the `same' $f_{\x_i}$ around two different spacetime collapse centers, $\x_i$.   We wish to simply remark here that there are various ways in which this can be done and do not enter into details. For an example, see \cite{Bedingham:2016aus}.}

Then by choosing collapse generators of the form $\Phi(f_{\x_i})$,
one can produce a model that satisfies all of our principles. More generally, one can choose polynomials of field operators that are covariantly smeared in the smooth functions of compact support $f_{\x_i}$, in accordance with the following definition:

\begin{defn}
We call a collapse generator a \textit{covariantly smeared polynomial collapse generator in} $f_\x$ if it is a polynomial in fields smeared against test functions covariantly constructed out of the metric, its inverse and their derivatives, the test function $f_\x$ and its covariant derivatives.
\label{def:covpol}
\end{defn}

We shall refer to such collapse generators as \textit{covariant polynomial collapse generators}. 

Some examples of covariant polynomial collapse generators are $\alpha \Phi(f{_\x})$, $\delta  \Phi( R f{_\x})$ and $\beta \Phi(\Box f{_\x}) \Phi(f{_\x})  + \gamma \Phi(f{_\x})$. By the linearity of the field, equalities such as  $\Phi(\alpha R f{_\x} + \beta \Box  f{_\x})\Phi(f{_\x}) = \alpha \Phi( R f{_\x}) \Phi(f{_\x}) + \beta \Phi(\nabla^c\nabla_c  f{_\x})\Phi(f{_\x})$ hold. We define derivatives of the field weakly, {\it e.g.}, $(\nabla_a \Phi)(g^{ab} \nabla_b f{_\x}) = -\Phi(\Box f{_\x})$, so, {\it e.g.}, by the field equation, $\Phi\left(\left( \Box - m^2 - \xi R \right)f{_\x}\right) = 0$.

The reduction rule, eq. \eqref{RedRuleRel}, generalizes as follows:  Let $\alpha$ be the collapse parameter of the theory.   Given a set of spacetime collapse centers, $\{ \x_i \}_I$, that have been chosen according to the Poisson distribution discussed above, let $Z \in \mathbb{R}$
be a real-scalar constant, randomly chosen (according to \eqref{PROB})  for each spacetime collapse center, and centering around each one of these spacetime collapse centers and choosing a fixed smooth function of compact support, $f_{\x_i}$, define the evolution law at a constant-time Cauchy surface for a choice of global time function passing through any one of these spacetime collapse centers by
\begin{subequations}
\begin{align}
\psi  \mapsto \psi_{\x_i} & = \frac{\widehat{L}_{\x_i}(Z) \psi }{\langle \widehat{L}_{\x_i}(Z) \psi | \widehat{L}_{\x_i}(Z) \psi \rangle^{1/2}}, \, \, \text{ with }  \\
\, \, \widehat{L}_{\x_i}(Z) & = \left(\frac{\pi}{2\alpha}\right)^{-1/4}\exp\left[-\alpha\Big(\widehat{Q}(\widehat{\Phi}, f_{\x_i}) - Z\widehat{\I} \Big)^2 \right],
\end{align}
\label{psix}
\end{subequations}
\\ where $Q(\Phi, f_{\x_i})$ is a fixed covariantly smeared polynomial collapse generator in $f_{\x_i}$. 
$Z$ will again be called the \textit{field-space collapse center}.    

It may seem somewhat strange (and partly teleological) that the collapse operator which governs the collapses at a particular Cauchy surface apparently depends on the values of the quantum field in some spacetime region, {\it i.e.} the union of the supports of the smearing functions involved in its definition, which includes parts of both the future and the past of that surface.   However, one can of course take the point of view that the observables are localized {\it on} the Cauchy surface.   For example, using the terminology of  \cite{Kay:1988mu}, the `covariantly smeared' field $\Phi(f)$ for some smooth compactly supported test function, $f$, is equal to the `symplectically smeared' field 
$\sigma(\Phi, \phi_c)$ where $\phi_c$ is the classical solution $Ef$.   In other words it is $\varphi(p) - \pi(f)$ where $(\varphi, \pi)$ are the Cauchy data of the quantum field, $\Phi$, and $(f,p)$ the Cauchy data of the classical solution $\phi_c=Ef$ on the Cauchy surface.    

A word of caution is due:  Even when two spacetime collapse centers, $\x_1, \x_2$ are spacelike separated, we would expect there to be points in the support of $f_{\x_1}$ and points in the support of $f_{\x_2}$ which are not spacelike separated.  Thus microcausality of our Klein-Gordon field algebra is not sufficient for commutativity of $L_{\x_1}$ and $L_{\x_2}$.  So, in view of eq.\ \eqref{colmicro} or, equivalently, \eqref{MicroCaus}, and the discussion around these equations, we have no guarantee that the dynamical evolution will be independent of the global time-function.

In our opinion, there are three ways out of this problem. The first solution has been provided by Bedingham \cite{Bedingham:2010hz} and introduces additional non-dynamical fields. As for the other two, we refrain from providing a rigorous formulation of these alternatives, but rather explain the lines of thought in general terms. 

The first alternative consists of dealing with smooth functions with diamond-like support for the field smearings and providing a unique prescription for the ordering of the collapse generators, whereby if ${\rm supp}(f_{\x_i})$ and ${\rm supp}(f_{\x_j})$ are not spacelike separated  and ${\rm supp}(f_{\x_i}) \cap J^+({\rm supp}(f_{\x_j})) = \emptyset$, then the collapse operators are ordered as $T_c(L_{\x_i} L_{\x_j}) = L_{\x_j} L_{\x_i}$, where by $T_c$ we mean collapse time-ordering, {\it i.e.}, the collapse operator $L_{\x_i}$ acts before $L_{\x_j}$. The second alternative can be loosely stated as requiring that the supports of the functions $f_{\x_i}$ be sufficiently small and that the collapse events be sufficiently scarce, such that ordering problems do not occur.\footnote{This would mean heuristically that support of $f_\x$ is so small that state reductions occur almost `at a single point'.} This could be achieved by modifying the Poisson distribution yielding the spacetime collapse centers $\x_i$, but for the moment we refrain from being more precise than this, as this is not the purpose of this paper.

\section{The Hadamard condition and the energy-momentum renormalization}
\label{sec:Hadamard}

In this section, we review the Hadamard condition on the two-point function of states of the Klein-Gordon field, as was first formulated by one of us and Wald in 
\cite{Kay:1988mu} in a rigorous fashion.\footnote{We remind the reader that $(M,g)$ is assumed to be globally hyperbolic and in this paper we deal only with this case. For the definition of a Hadamard state in asymptotically AdS spacetimes, see the recent paper \cite{Dappiaggi:2017wvj}.} This serves several purposes: First, it allows for this work to be as self-contained as possible. Second, it introduces the distributional language of quantum field theory. Third, it allows one to prove the smoothness of certain quantities that will appear in our main result (see appendix \ref{app:A}). We then review how the renormalized energy-momentum tensor is defined by a point splitting prescription and how its expectation value in any Hadamard state satisfies Wald's axioms \cite{Wald:1995yp}.

\subsection{The Hadamard condition}

The Hadamard condition is a property of a state of a given linear theory which allows one to obtain certain renormalized non-linear observables, such as the energy-momentum tensor, which do not belong to the minimal algebra ({\it cf.}\ chapter 3 in \cite{Kay:1988mu}) of (essentially) sums of products of smeared fields. In this subsection, we return to our discussion of the Klein-Gordon theory and we shall state precisely what the Hadamard condition is for a state on this theory.

Two-point functions such as defined in equation \eqref{KG2pt}, typically arise from unsmeared two-point functions using the standard $\ii \epsilon$ ``integrate then take the limit" prescription,
\begin{align}
\omega_2(\Phi(f)\Phi(g)) & = \lim_{\epsilon \to 0^+} \!\! \int_{M \times M} \!\!\!\!\!\!\!\!\!\!\!\!\! d\vol(\x) \, d\vol(\y) \, f(\x) g(\y) W_2^\epsilon(\x,\y),
\label{omega2}
\end{align}
where $W_2^\epsilon(\x, \y)$ is a two-point function with a suitable small imaginary part. For example, in the familiar case of a massless scalar field in four-dimensional Minkowski spacetime in the Minkowski ground state, $W_2^\epsilon(\x, \y)$ is\footnote{More generally, the $\ii \epsilon \left( t-t' \right)$ term can be replaced by any $\ii \epsilon \left( T(\x)-T(\x') \right)$, where $T$ is an arbitrary future increasing time function in the spacetime.} 
\begin{align}
W^{{\rm M}}_\epsilon \! \left((t, \mathbf{x}); (t', \mathbf{x'})\right) \! \! = \! \frac{1/(4\pi)}{-\!\left( t-t' \right)^2 \!\! + \! \left| \mathbf{x}-\mathbf{x}' \right|^2 \!\! + \! \ii \epsilon \left( t-t' \right)\! + \epsilon^2 }. 
\end{align}

The integrand on the right-hand side of eq. \eqref{omega2} is integrable for each $\epsilon > 0$, so the left-hand side exists if the limit exists. Moreover, the anti-symmetric part of the two-point distribution is fixed by the CCR, $\omega_2(f,g) - \omega_2(g,f) = -\ii E(f,g)$, while  the symmetric part  is  fully determined by the state. It is the ultraviolet behaviour of the symmetric part of the two-point distribution that provides the criterion as to whether a state is Hadamard. In order to provide the definition for a Hadamard state, we first state two useful geometric definitions:

\begin{defn}
A {\it convex normal neighborhood} $U \subset M$ is an open set such that for any pair of points $\x, \y \in U$, there exists a unique geodesic from $\x$ to $\y$ fully contained in $U$.
\label{def:ConvexNN}
\end{defn}

In convex normal neighborhoods, one can define unambiguously the squared geodesic distance between two points in $(\x, \y) \in U \times U$, which, following \cite{Kay:1988mu} we denote by $\sigma(\x,\y)$.

\begin{defn}
Let $C \subset M$ be a Cauchy surface of $M$. We say that  the open $N \subset M$ is a {\it causal normal neighborhood of} $C$ if $C \subset N$ and for any pair of points $\x, \y \in N$ such that $\x \in J^+(\y)$, there exists a convex normal neighborhood containing $J^-(\x) \cap J^+(\y)$.
\label{def:CausalNN}
\end{defn}

Also, following \cite{Kay:1988mu}, we note that for any Cauchy surface $C \subset M$ it is always possible to find a causal normal neighborhood $N$ such that $C \subset N$.

We are now ready to state the definition \cite{Kay:1988mu} of a Hadamard state.

\begin{defn}[{\bf Hadamard state}]

Let  $(M,g)$  be  a spacetime and let $T$ be a global time function, increasing to the future, whose constant time surfaces are Cauchy surfaces. Let $N$ be a causal normal neighborhood of some such Cauchy surface, $C$, and let $O$ be an open neighborhood of $N\times N$.  Further, let $O' \subset N \times N$ be an open neighborhood whose closure is contained in $O$. Now, let $\chi \in C^\infty(N \times N)$ be an interpolating function, such that $\chi(\x, \y) = 1$ if $(\x,\y) \in O'$ and $\chi(\x, \y) = 0$ if $(\x,\y) \notin O$. We say that the state $\omega:\mathscr{A} \to \mathbb{C}$ is a {\it Hadamard state} if, for each $n \in \mathbb{N}$ and $\epsilon > 0$, there  exists  a bi-function   $ W^n \in C^n(M \times M)$  such  that,
\begin{widetext}
\begin{align}
\omega_2(f,g) & = \lim_{\epsilon \to 0^+} \int_{M \times M} \! d\vol(\x) \, d\vol(\y) \, f(\x) g(\y) \, \left[\chi(\x, \y) H_\epsilon^{T, n}(\x,\y) + W^n(\x,\y) \right],
\label{Had1}
\end{align}
\end{widetext}
with  $H_\epsilon^{T,n}: N \times N \to \mathbb{C}$ defined as
\begin{align}
H_\epsilon^{T, n} (\x,\y) \! = \! \frac{1}{(2 \pi)^2} \! \left(\frac{\Delta^{1/2}(\x, \y)}{\sigma_\epsilon(\x, \y)} \! + \! V^{(n)}(\x, \y) \ln[\sigma_\epsilon(\x, \y) ] \right).
\label{Parametrix}
\end{align}
Here,  the logarithm branch cut is taken along the negative real axis, $\sigma_\epsilon(\x, \y) = \sigma(\x, \y) + 2 \ii \epsilon[T(\x) - T(\y)] + \epsilon^2$, $\Delta$ is the van Vleck-Morette determinant \cite{deWittBrehme} and where each $V^{(n)}= \sum_{k=0}^n v_k \sigma^k$ is a smooth bi-function with smooth bi-function coefficients $v_k$ determined by the Hadamard recursion relations (see \cite{deWittBrehme, Gara},  see also \cite{Decanini:2005gt}) up to order $n$, which guarantees that $H_\epsilon^{T, n}$ is a Green function of the Klein-Gordon equation to order $n$.
\label{Def:Hadamard}
\end{defn}

Several comments are in place: First, because $W^n$ is $C^n$ for each $n$, with $n$ as large as desired, this contribution can be taken to be smooth \cite{Kay:1988mu}. Second, the definition is independent of the time function, $T$, the chosen Cauchy surface, $C$,  the interpolating function $\chi $ and the chosen  causal normal neighborhood of $C$, $N$. See \cite[sec. 3.3]{Kay:1988mu}.

In light of eq. \eqref{omega2}, the Hadamard condition can be formally stated as
\begin{equation}
\omega^\epsilon(\Phi(\x)\Phi(\y))= \omega_2^\epsilon(\x, \y) = \chi(\x, \y) H_\epsilon^{T, n}(\x,\y) + W^n(\x,\y)
\label{FormalHad}
\end{equation}
for each $n$, but we would like to stress that definition \ref{Def:Hadamard} takes into account all the distributional and geometric subtleties that give sense to the formal equation \eqref{FormalHad}.

\subsection{The renormalization of the energy-momentum tensor}

That the energy-momentum tensor is not in the Klein-Gordon field algebra can be seen immediately already because it is involves the product of fields at the same point before smearing. The construction of such observables out of free fields requires a renormalization prescription.

A covariant renormalization axiomatic prescription for the energy-momentum tensor has been given by Wald in what is now known as the Wald axioms \cite[sec. 4.6]{Wald:1995yp}, {\cite{Wald78}: (i) If $\omega_1(\Phi(\x)\Phi(\y)) - \omega_2(\Phi(\x)\Phi(\y))$ is a smooth function, then $\omega_1\left(T^{{\rm ren}}_{ab}(\x)\right) - \omega_2\left(T^{{\rm ren}}_{ab}(\x)\right)$ is smooth by a splitting prescription, (ii) $T^{{\rm ren}}_{ab}$ is local with respect to the state of the field and invariant under globally hyperbolic isometric embeddings, (iii) For all states, $\nabla^a \omega\left(T^{{\rm ren}}_{ab}\right)) = 0$ and (iv) $\omega_{{\rm M}}\left(T^{{\rm ren}}_{ab}\right) = 0$ in the Minkowski vacuum. 

A renormalization point-splitting scheme that satisfies Wald's axioms consists on constructing the {\it Hadamard parametrix}\footnote{Whenever the spacetime is not analytic, convergence of the asymptotic series can be ensured with the aid of a set of cut-off functions. See \cite{Hack:2012qf} and references therein.} from the Hadamard recursion relations,

\begin{align}
& H(\x,\y) = H_0^{T,\infty}(\x,\y) \nonumber \\
&= \! \lim_{\epsilon \to 0} \! \frac{1}{(2 \pi)^2} \! \left(\!\!\! \frac{\Delta^{1/2}(\x, \y)}{\sigma_\epsilon(\x, \y)} \! + \!\! \sum_{k=0}^\infty \! v_k(\x, \y) \sigma^k(\x, \y) \ln[\sigma_\epsilon(\x, \y) ] \!\! \right)\!\!  ,
\end{align}
and obtaining the renormalized energy-momentum tensor by a point-splitting restriction with respect to the Hadamard parametrix as follows.   Let $\omega$ be a Hadamard state.  Then 
\begin{widetext}
\begin{equation}
\omega(T^{{\rm ren}}_{ab}(\x))  = \lim_{\x' \to \x} \left\{ \left[ \nabla_a \nabla_{b'} -\frac{1}{2} g_{ab}(\x)\left(\nabla_c\nabla^{c'} + m^2 + \xi R(\x) \right) \right] G\left(\x, \x'\right) + P_{ab}(\x) \right\},
\label{Tren}
\end{equation}
\end{widetext}
where 
\begin{equation}
G(\x, \x')  = \omega(\Phi(\x)\Phi(\y)) - H(\x,\y).
\end{equation}

Here, $P_{ab}$ is a certain local, symmetric tensor correction term ($(1/32\pi^2)[v_1]_cg_{ab}$ in the notation of \cite{Wald78}) introduced in \cite{Wald78}, (see also \cite{KhavkineMoretti}) needed in order for the covariant conservation equation $\nabla_a\omega(T_{ab}^{\rm ren}$ to hold and giving rise, in the approach of \cite{Wald78}, to the trace-anomaly (see \cite{Wald78, Birrell:1982ix}) in the case the of conformal coupling.   

As explained in \cite{Wald78, Birrell:1982ix} and in the sense explained in those references, $T^{{\rm ren}}_{ab}(\x)$ is understood to be ambiguous up to the addition of arbitrary linear combinations of `local geometrical terms', $g_{ab}$, $G_{ab}$, $^1H_{ab}$ and $^2H_{ab}$, which come from adding arbitrary linear combinations of $\sqrt g$, $\sqrt g R$ and two further terms involving higher derivatives of the metric to the Lagrangian.  
By construction, $T^{{\rm ren}}$ satisfies Wald's axioms: The Hadamard parametrix guarantees that (i) and (ii) hold, while, as we discussed above, the correction term $P_{ab}$  ensures (iii). (iv) can always be ensured by taking advantage of the, just mentioned, freedom to add multiples of $g_{ab}$.

The defining equations \eqref{Tren} provide a prescription for making sense of the semiclassical Einstein field equations,
\begin{equation}
G_{ab} = 8 \pi G_\text{N} \, \omega\left(T^{{\rm ren}}_{ab}\right),
\end{equation}
for computing back-reaction effects. To be precise, as discussed above, one needs to consider higher order derivative terms in the metric at this stage \cite{Birrell:1982ix}.

\section{The main results}
\label{sec:Result}

In section \ref{sec:DynRed} we argued that any collapse operator of the dynamical reduction model must be constructed covariantly. We proposed that this preferred operator be generated by a polynomial in the field smeared against a geometric and covariant polynomial constructed out of the metric and its derivatives acting on the test function $f_\x$. We referred to this class of collapse generators as covariant polynomial. (See definition \ref{def:covpol} in section \ref{subsec:choice} for the precise definition.) 

The purpose of this section is to state and prove the main results of this paper, which show that the states resulting from the collapse of a large class of Hadamard states are themselves Hadamard states, when the collapse generator is such a covariant polynomial.

More precisely, in the first part of this section, in \ref{subsec:Theorem}, we show that given a covariantly smeared \textit{monomial} collapse generator, {\it i.e.} a collapse generator that is linear in the field, and given an initial Hadamard state vector in our Hilbert space\footnote{\label{single} See section \ref{subsec:CovCollRev} for the definition of the term `Hadamard state vector' and recall that, as explained there, we we will be able to work in a fixed Hilbert space representation in which all pure quasi-free Hadamard states arise from vector states thanks to our assumption that our spacetime has a compact spatial section.}, which belongs to a certain dense domain (the definition of which, in turn, depends on a fixed but arbitrary choice of another quasi-free Hadamard state vector which we call $\Omega$ below) the post-reduction state vector will also be Hadamard. We further show that the post-reduction state vector remains Hadamard on a certain enlarged domain which is an invariant domain for the collapse operator and therefore any number of successive collapses will also result in Hadamard states. The collection of these observations is our first main result, theorem \ref{MainThm}.

In the second part of this section, subsection \ref{subsec:Pert}, we weaken the linearity assumption and prove with, however, a slightly lower standard of rigor that a perturbative version of the Hadamard condition holds for any general covariant polynomial operators of finite order. This is our second main result, theorem \ref{MainThm2}.

We conclude this section with an example that connects our results with the discussion of section \ref{sec:Hadamard} on the renormalizability of the energy-momentum tensor. This is the content of subsection \ref{subsec:Ex}.

\subsection{A result for linear covariant polynomial collapse generators}
\label{subsec:Theorem}

\begin{thm}
Let $(\pi, \cal{D} \subset \mathscr{H},$ $\Omega)$ be the GNS triple of the Klein-Gordon theory with respect to a quasi-free algebraic Hadamard state $\omega: \mathscr{A} \to \mathbb{C}$ on the real Klein-Gordon field algebra for our spacetime as defined in section \ref{QFT}, and let $\mathscr{D}$ be the set of vectors, $\psi$ which arise as finite sums of form
\begin{equation}
\label{lincomb}
\psi = \sum_{n=1}^N c_n e^{i\widehat{\Phi}(g_n)^c}
\end{equation}
for arbitrary $N$ and arbitrary functions $g_i$, $i=1 \dots N$ in $C_0^\infty(M)$\footnote{\label{dense} In other words, $\mathscr{D}$ consists of finite linear combinations of  Weyl operators acting on the `vacuum' vector, $\Omega$, or, in yet other words, to finite linear combinations of coherent states built on $\Omega$.   In view of well-known properties of Weyl operators -- equivalently in this context of coherent states -- $\mathscr{D}$ is, therefore, itself a dense domain.} where (here and throughout) we take $\widehat{\Phi}(g)$, for any $g\in C_0^\infty(M)$ to denote 
$\pi(\Phi(g))$ on the domain $\cal D$ and $\widehat{\Phi}(g_n)^c$ to denote its closure.   The latter will be self-adjoint by the fact that -- see, {\it e.g.}, section 5.2.4 in \cite{KhavkineMoretti} -- for all $g\in C_0^\infty(M)$, $\widehat\Phi(g))$ is essentially self-adjoint on the domain $\cal D$. Let $f$ be a particular choice of $C_0^\infty$ test function on $M$ and take $\widehat{L}_\z: \mathscr{D} \to \mathscr{H}$ to be the self-adjoint, bounded (`collapse') operator, defined by 
\begin{equation}
\widehat{L}_\z = \exp\left[-\alpha\Big(\widehat{\Phi}(F)^c - Z \widehat{\I} \Big)^2 \right]
\label{Lz}
\end{equation}
with $\alpha >0$, $Z \in \mathbb{R}$. We have that:
\begin{enumerate}
\item Any normalized $\psi  \in \mathscr{D}$ is a Hadamard state vector. It follows that the vector $\psi_\z  = \widehat{L}_\z\psi /\langle \widehat{L}_\z \psi  \, | \, \widehat{L}_\z \psi \rangle^{1/2}$ is a Hadamard state vector.
\item Let $\mathscr{G}$ be the dense subset of $\mathscr{H}$ consisting of finite linear combinations of vectors of the form 
$\psi_n = \prod_{k=0}^{n} \widehat{L}_{\z_k} \psi$ where we adopt the convention (here and throughout) that the right hand side means $\widehat{L}_{\z_0}\widehat{L}_{\z_1}\dots\widehat{L}_{\z_n} \psi$, where $\psi \in \mathscr{D}$ and let
\begin{equation}
\widehat{L}_{\z_k} = \exp\left[-\alpha\Big(\widehat{\Phi}(F^k)^c - Z_k \widehat{\I} \Big)^2 \right],
\label{Lzk}
\end{equation}
be collapse operators labelled by the non-negative integer $k$, then, if any normalized $\psi \in \mathscr{G}$ is a Hadamard state vector, $\psi_\z = \widehat{L}_\z\psi /\langle \widehat{L}_\z \psi  \, | \, \widehat{L}_\z \psi \rangle^{1/2}$ will also be a Hadamard state vector.
\end{enumerate}
\label{MainThm}
\end{thm}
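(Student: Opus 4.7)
The plan is to reduce everything to explicit computations with coherent states built on $\Omega$ by representing the bounded self-adjoint operator $\widehat{L}_\z$ as a Gaussian superposition of Weyl operators. By the functional calculus for $\widehat{\Phi}(F)^c$ together with the scalar identity $e^{-\alpha x^2} = (4\pi\alpha)^{-1/2}\int e^{-k^2/(4\alpha)}e^{\ii k x}dk$, one has
\[
\widehat{L}_\z = \frac{1}{\sqrt{4\pi\alpha}}\int_{\mathbb R} e^{-k^2/(4\alpha) - \ii k Z}\, W(kF)\, dk, \qquad W(g):=e^{\ii\widehat{\Phi}(g)^c},
\]
the integral converging in the strong operator topology on $\mathscr{D}$ by Gaussian decay together with strong continuity of $k\mapsto W(kF)$. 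A preliminary computation using the Weyl CCR $W(f)W(g) = e^{\ii E(f,g)/2}W(f+g)$ and $[\widehat{\Phi}(f), W(g)] = E(f,g)W(g)$ shows that, for any pair $(n,m)$,
\[
\langle W(g_n)\Omega\,|\,\widehat{\Phi}(f_1)\widehat{\Phi}(f_2) W(g_m)\Omega\rangle = \langle W(g_n)\Omega\,|\,W(g_m)\Omega\rangle\,\omega_2(f_1,f_2) + S_{n,m}(f_1,f_2),
\]
where $S_{n,m}$, after partial smearing, is a bi-linear expression in the smooth causal pairings $E(g_n,\cdot)$, $E(g_m,\cdot)$ and in partial smearings of $\omega_2$ against the compactly supported smooth functions $g_n$, $g_m$ and $g_m-g_n$ -- all of them smooth bi-functions of $(\x_1,\x_2)$ by standard Hadamard microlocal regularity. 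Summing therefore yields $\omega_2^\psi - \omega_2^\Omega\in C^\infty$ for any normalized $\psi\in\mathscr{D}$, so $\psi$ itself is a Hadamard state vector.

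To handle $\psi_\z$, substitute the Gaussian representation into $\widehat{L}_\z\psi$ and apply the Weyl CCR to obtain
\[
\widehat{L}_\z W(g_n)\Omega = \frac{1}{\sqrt{4\pi\alpha}}\int e^{-k^2/(4\alpha)-\ii k Z}\,e^{\ii k E(F,g_n)/2}\, W(kF+g_n)\Omega\, dk,
\]
so that $\widehat{L}_\z\psi$ is itself a Gaussian-weighted integrated superposition of coherent states on $\Omega$. The unnormalised two-point function of $\widehat{L}_\z\psi$ is then a double Gaussian integral in $(k,k')$ whose integrand decomposes, by the previous step, into $\omega_2(f_1,f_2)$ multiplied by a coefficient whose $(k,k')$-integral reproduces $\|\widehat{L}_\z\psi\|^2$, plus a smooth bi-function of $(\x_1,\x_2)$ whose $(k,k')$-dependence is controlled by Gaussian decay. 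Interchanging the Gaussian integrations with the distributional pairings against $f_1, f_2$ (justified by Fubini and dominated convergence) yields $\omega_2^{\widehat{L}_\z\psi} = \|\widehat{L}_\z\psi\|^2\,\omega_2 + \widetilde S$ with $\widetilde S$ smooth, and dividing by $\|\widehat{L}_\z\psi\|^2$ proves claim (1).

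For claim (2) I would iterate: applying the Gaussian identity to each factor of a product $\prod_{k=0}^{n}\widehat{L}_{\z_k}$ and then using the Weyl CCR to collect the $(n{+}1)$ Weyl factors into a single one with accumulated phases rewrites the product as an $(n{+}1)$-fold Gaussian integral of a single Weyl operator; consequently every vector in $\mathscr{G}$ is a finite linear combination of Gaussian-weighted superpositions of coherent states on $\Omega$, and the argument of the previous paragraph gives $\omega_2^{\psi}-\omega_2^\Omega\in C^\infty$ for every such $\psi$. Applying one further $\widehat{L}_\z$ simply introduces one additional Gaussian integration, and rerunning the same reasoning proves that $\psi_\z$ is Hadamard. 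The main technical obstacle is justifying the interchange of these multiple Gaussian integrations with the distributional action of the test functions and verifying that the remainder is jointly $C^\infty$ -- not merely $C^n$ -- in $(\x_1,\x_2)$. This reduces to establishing uniform-in-$(k,k')$ smoothness bounds on the bi-functions $S_{n,m}$ and their $\x$-derivatives, bounds that are integrable against the Gaussian weights, which I expect is precisely the content of the smoothness estimates deferred to Appendix \ref{app:A}. Once those estimates are in hand, dominated convergence and Fubini close the argument.
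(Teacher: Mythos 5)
Your proposal is correct in substance but follows a genuinely different route from the paper. The paper's proof is organized around an operator-theoretic lemma (lemma \ref{Lem:Main1}): one normal-orders $\widehat{\Phi}(\x)\widehat{\Phi}(\y)$ with respect to $\Omega$ and shows that the post-collapse normal-ordered two-point function is smooth provided the single and double commutators $[\widehat{L},\widehat{\Phi}^\pm(f)]$ and $[[\widehat{L},\widehat{\Phi}^\pm(f)],\widehat{\Phi}^\pm(g)]$ are smooth (bi-)functions times operators; these hypotheses are then verified for the Gaussian collapse operator, and item \textit{2} is handled by expanding commutators of $\widehat{\Phi}^\pm(\x)$ with products $\prod_k\widehat{L}_{\z_k}$, with all smoothness ultimately traced to the single partial-smearing lemma \ref{LemmaComm} of appendix \ref{app:A}. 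You instead write $\widehat{L}_\z$ via the spectral calculus as a Gaussian superposition of Weyl operators, so that $\widehat{L}_\z$ (and iterated products of collapse operators) map coherent superpositions to Gaussian-weighted integrals of coherent states, reducing everything to explicit coherent-state matrix elements plus an interchange of integrations. Your route is more explicit and, for the linear case, avoids the paper's formal product-rule manipulation of operator exponentials; but it depends crucially on the collapse generator being \emph{linear} in the field (only then is $\ee^{\ii k(\widehat{\Phi}(F)^c-Z)}$ a Weyl operator), so, unlike the paper's commutator lemma, it gives no handle on the polynomial generators treated perturbatively in theorem \ref{MainThm2}. Two small caveats: the uniform-in-$(k,k')$ estimates you defer to appendix \ref{app:A} are not actually there (the appendix only contains the smoothness lemma for partial smearings and a nested-commutator identity), but they are easy to supply in your scheme, since the smooth remainders $S$ depend on $(k,k')$ only polynomially (degree at most two) with coefficients built from the fixed smooth functions $EF$, $Eg_n$, $Eg_m$ and partial smearings of $\omega_2$, multiplied by coherent-state overlaps bounded by one, so the Gaussian integrations reduce to finite moments and differentiation under the integral sign is immediate; and the smoothness of those partial smearings, which you attribute to microlocal regularity, is exactly what the paper's lemma \ref{LemmaComm} proves directly from the Kay--Wald form of the Hadamard condition.
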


We remark that in the application to the relativistic collapse scheme of section \ref{subsec:choice}, the test function $F$ will arise in the form $P_g f_\z$ and the test functions, $f^k$ of item {\it 2} above will arise in the form $P_g^kf_{\z_k}$.  Note also, regarding item {\it 2} that $\mathscr{D} \subset \mathscr{G} \subset \mathscr{H}$, where $P_g$ and $P_g^k$ are operators covariantly constructed out of the metric, its inverse and their derivatives, such that the collapse generator is covariantly smeared in $f_\z$ and $f_\z^k$ respectively.

Item {\it 1} in the above theorem means that a single collapse of a Hadamard state, which belongs to the dense domain, $\mathscr{D}$ of the Hilbert space of the theory, produces a Hadamard state vector. Item {\it 2} guarantees that successive collapses also yield a Hadamard state vector. We now prove theorem \ref{MainThm}.

We remark that the domains, $\mathscr{D}$ and $\mathscr{G}$, each depend on the choice of quasi-free Hadamard algebraic state, $\omega$ and therefore given that we can and do (see footnote \ref{single})) regard each of their cyclic state vectors, $\Omega$,  as belonging to the Hilbert space of our chosen representation, we have many domains, say, $\mathscr{D_\Omega}$ and $\mathscr{G_\Omega}$ for each such $\Omega$, each of which (see footnote \ref{dense}) is dense by itself and our theorem therefore guarantees that any initial Hadamard state vector in the \textit{union} of all the $\mathscr{D_\Omega}$ or, indeed, in the, larger, \textit{union} of all the $\mathscr{G_\Omega}$ will be mapped by any of our collapse operators, $\widehat{L}_\z$ for any $\z$, into a Hadamard state vector (where, moreover, we know that when $\psi$ belongs to $\mathscr{G_\Omega}$ for a particular $\Omega$ then $\widehat{L}_\z\psi$ will belong to $\mathscr{G_\Omega}$ for the same $\Omega$.   However, it remains an open question whether every state vector in our Hilbert space is mapped into a Hadamard state by each or any $\widehat{L}_\z$.

\begin{proof}

We recall from section \ref{QFT} that $\pi$ will act so that $\pi(\Phi(f)) (= \widehat{\Phi}(f)) = (\ii\, \widehat{a}(KEf) - \ii\, \widehat{a}^*(KEf))$.  We denote the positive polarization of the field in this representation by $\widehat{\Phi}^+(f) = - \ii\, \widehat{a}^*(KEf)$ and the negative polarization by  $\widehat{\Phi}^-(f) = \ii\, \widehat{a}(KEf)$. We remark here that, while all these operators are originally only defined on $\cal D$, they also have an obvious meaning as operators on $\mathscr D$.

That the operator $\widehat{L}_\z: \mathscr{D} \to \mathscr{H}$ defined by eq. \eqref{Lz} is self-adjoint follows from the self-adjointness of $\widehat{\Phi}(F)^c$. That it is bounded follows from the fact that $||\widehat{L}_\z \psi|| = \langle \widehat{L}_\z \psi | \widehat{L}_\z  \psi \rangle^{1/2} \leq ||\psi||$.

\vspace*{0.2cm}
\textbf{Proof of item } {\it 1}.
\vspace*{0.2cm}

First, we show that $\psi  \in \mathscr{D}$ given by 
\begin{equation}
|\psi \rangle = \frac{\sum_{n = 0}^N c_n | \ee^{\ii \widehat{\Phi}(f_n)} \Omega \rangle}{\left( \sum_{i = 0}^N \sum_{j = 0}^N \overline{c_i} c_j \langle \ee^{-\ii \widehat{\Phi}{f_i}} \Omega | \ee^{\ii \widehat{\Phi}(f_j)} \Omega \rangle\right)^{1/2}}
\end{equation}
is a Hadamard state, {\it i.e.}, that $\omega_\psi(\x,\y) = \langle \psi | \widehat{\Phi}(\x) \widehat{\Phi}(\y) \psi \rangle$ has Hadamard form. To this end, we write
\begin{equation}
\omega_\psi(\x, \y) = \langle \psi | \left( :\widehat{\Phi}(\x) \widehat{\Phi}(\y): + \left[ \widehat{\Phi}^-(\x), \widehat{\Phi}^+(\y) \right] \right) \psi \rangle,
\end{equation}
where $: \cdot :$ denotes normal ordering with respect to $|\Omega \rangle$. Because $\left[ \widehat{\Phi}^-(\x), \widehat{\Phi}^+(\y) \right]$ is a c-bidistribution times the identity operator, we have that
\begin{align}
\omega_\psi(\x, \y) & = \langle \psi |  :\widehat{\Phi}(\x) \widehat{\Phi}(\y): \psi \rangle + \langle \Omega | \left[ \widehat{\Phi}^-(\x), \widehat{\Phi}^+(\y) \right] \Omega \rangle  \nonumber \\
& = \langle \psi |  :\widehat{\Phi}(\x) \widehat{\Phi}(\y): \psi \rangle + \langle \Omega |  \widehat{\Phi}(\x) \widehat{\Phi}(\y) \Omega \rangle,
\label{had1}
\end{align}
and the second term on the right-hand side of eq. \eqref{had1} is of Hadamard form because the algebraic state, $\omega$, is Hadamard. We are left to show that the first term on the right-hand side of eq. \eqref{had1}, given by
\begin{align}
\langle \psi |  :\widehat{\Phi}(\x) \widehat{\Phi}(\y): \psi \rangle & = \langle \psi |  \left( \widehat{\Phi}^+(\x) \widehat{\Phi}^+(\y) + \widehat{\Phi}^+(\x) \widehat{\Phi}^-(\y)  \right. \nonumber \\
& \left. + \widehat{\Phi}^+(\y) \widehat{\Phi}^-(\x) + \widehat{\Phi}^-(\x) \widehat{\Phi}^-(\y) \right)  \psi \rangle,
\label{Had1.2}
\end{align}
is smooth. This can be shown using the commutator relation $\left[\widehat{\Phi}^\pm(\x), \ee^{\ii \widehat{\Phi}(f)} \right] = \ii \left[\widehat{\Phi}^\pm(\x), \widehat{\Phi}^\mp(f)\right] \ee^{\ii \widehat{\Phi}(f)^c}$, and noticing, using lemma \ref{LemmaComm} in appendix \ref{app:A1}, that this commutator is of the form of a smooth function multiplying the Weyl operator, $\ee^{\ii \widehat{\Phi}(f)^c}$. We demonstrate how to handle the second term in \eqref{Had1.2}. The rest of the terms can be handled similarly.
\begin{widetext}
\begin{align}
 \langle \psi |  \widehat{\Phi}^+(\x) \widehat{\Phi}^-(\y) \psi \rangle & = \frac{\sum_{n = 0}^N \sum_{m = 0}^N \overline{c_n} c_m\langle \widehat{\Phi}^-(\x) \ee^{\ii \widehat{\Phi}(f_n)} \Omega |   \widehat{\Phi}^-(\y) \ee^{\ii \widehat{\Phi}(f_m)} \Omega \rangle}{ \sum_{i = 0}^N \sum_{j = 0}^N \overline{c_i} c_j \langle \ee^{\ii \widehat{\Phi}(f_i)} \Omega | \ee^{\ii \widehat{\Phi}(f_j)} \Omega \rangle} \nonumber \\
& = \frac{\sum_{n = 0}^N \sum_{m = 0}^N \overline{c_n} c_m\langle  \Omega | \left[\ee^{-\ii \widehat{\Phi}(f_n)},  \widehat{\Phi}^+(\x)\right]  \left[\widehat{\Phi}^-(\y), \ee^{\ii \widehat{\Phi}(f_m)} \right] \Omega \rangle}{ \sum_{i = 0}^N \sum_{j = 0}^N \overline{c_i} c_j \langle \ee^{\ii \widehat{\Phi}(f_i)} \Omega | \ee^{\ii \widehat{\Phi}(f_j)} \Omega \rangle}.
\end{align}
\end{widetext}

We now show that $\widehat{L}_\z \psi$ is also a Hadamard state vector. This result follows from a lemma that we now state and prove.

\begin{lemma}
Let $\pi, \mathscr{D} \subset \mathscr{H}, \Omega$ be as previously defined. Let $\widehat{L}: \mathscr{D} \to \mathscr{H}$ be a self-adjoint operator on the Hilbert space, such that for any $f, g \in C_0^\infty(M)$, 
\begin{enumerate}[(i)]
\item the commutator 
\begin{equation}
[\widehat{L}, \widehat{\Phi}^\pm(f)] = \int_M d\vol(\x) \,f(\x) \, [\widehat{L}, \widehat{\Phi}^\pm(\x)]
\end{equation}
defines an operator on the Hilbert space times a $C^\infty(M)$ function (namely $[\widehat{L}, \widehat{\Phi}^\pm(\x)]$), and
\item the double commutator 
\begin{align}
[[\widehat{L}, \widehat{\Phi}^\pm(f)], \widehat{\Phi}^\pm(g)] \nonumber \\ = \int_{M\times M} \! d\vol(\x) d\vol(\y) \,f(\x) \, g(\y) [[\widehat{L}, \widehat{\Phi}^\pm(\x)], \widehat{\Phi}^\pm(\y)],
\end{align} 
defines an operator on the Hilbert space times a $C^\infty(M \times M)$ bi-function, $[[\widehat{L}, \widehat{\Phi}^\pm(\x)], \widehat{\Phi}^\pm(\y)]$.
\end{enumerate}

Then, if $\psi  \in \mathscr{D}$ is a Hadamard state vector, it follows that $\psi_c  = \widehat{L} \psi /\langle \widehat{L} \psi \widehat{L} \psi \rangle^{1/2}$ is a Hadamard state vector.
\label{Lem:Main1}
\end{lemma}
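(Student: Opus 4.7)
The plan is to show that $\omega_{\psi_c}(\x,\y) := \langle \psi_c | \widehat{\Phi}(\x)\widehat{\Phi}(\y)\psi_c\rangle$ differs from the Hadamard bi-distribution $\omega_\Omega(\x,\y)$ by a smooth bi-function on $M\times M$. Normal-ordering with respect to $\Omega$ gives the operator identity $\widehat{\Phi}(\x)\widehat{\Phi}(\y) = \,:\widehat{\Phi}(\x)\widehat{\Phi}(\y):\, + \,\omega_\Omega(\x,\y)\,\widehat{\I}$, and hence
\[
\omega_{\psi_c}(\x,\y) - \omega_\Omega(\x,\y) \;=\; \frac{\langle \widehat{L}\psi | :\widehat{\Phi}(\x)\widehat{\Phi}(\y): \widehat{L}\psi\rangle}{\|\widehat{L}\psi\|^2},
\]
so it is enough to establish that the numerator on the right is a smooth bi-function of $(\x,\y)$.

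My main computational step would be to push the two $\widehat{L}$'s through the normal-ordered product. Writing $\widehat{A}(\x) := [\widehat{L},\widehat{\Phi}(\x)]$, which by hypothesis (i) is an operator-valued smooth function of $\x$, the Leibniz rule for commutators yields the algebraic identity
\[
\widehat{L}\,:\widehat{\Phi}(\x)\widehat{\Phi}(\y):\,\widehat{L} \;=\; :\widehat{\Phi}(\x)\widehat{\Phi}(\y):\,\widehat{L}^2 \,+\, \widehat{A}(\x)\widehat{\Phi}(\y)\widehat{L} \,+\, \widehat{\Phi}(\x)\widehat{A}(\y)\widehat{L}.
\]
Taking the expectation value in $\psi$ splits the numerator into a \emph{main} piece $\langle \psi | :\widehat{\Phi}(\x)\widehat{\Phi}(\y): \widehat{L}^2\psi\rangle$ and two \emph{correction} pieces. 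In each correction piece I would use hypothesis (ii) to commute $\widehat{A}$ past the adjacent sharp field $\widehat{\Phi}$, at the cost of the smooth bi-function $[\widehat{A}(\x),\widehat{\Phi}(\y)]$, and then use (i) again to commute the surviving $\widehat{\Phi}$ past the rightmost $\widehat{L}$ modulo a further smooth term. After finitely many such moves each correction piece becomes a finite sum of smooth bi-functions of $(\x,\y)$ multiplying bounded expectation values of the form $\langle \psi | \widehat{O}\psi\rangle$, and is therefore smooth.

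The main piece is a \emph{mixed} normal-ordered two-point function involving the two vectors $\psi$ and $\widehat{L}^2\psi$, so the Hadamard property of $\psi$ alone does not immediately yield its smoothness; this is the step I expect to be the principal obstacle. My plan is to deduce it by polarization, writing $4\langle \psi | :\widehat{\Phi}(\x)\widehat{\Phi}(\y): \widehat{L}^2\psi\rangle$ as a linear combination of diagonal normal-ordered two-point functions in the four vectors $\psi \pm \widehat{L}^2\psi$ and $\psi \pm \ii\,\widehat{L}^2\psi$. Each of these diagonal two-point functions is smooth provided both $\psi$ and $\widehat{L}^2\psi$ are Hadamard state vectors; and since $\widehat{L}^2$ inherits hypotheses (i) and (ii) from $\widehat{L}$ via the derivation property of the commutator, the Hadamard property of $\widehat{L}^2\psi$ can be obtained by recursively applying the present lemma to $\psi$. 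A cleaner alternative I would pursue in parallel is to phrase the argument microlocally: the smoothness of all relevant commutators of $\widehat{L}$ with the fields means that conjugation by $\widehat{L}$ cannot enlarge the wave-front set of the two-point distribution beyond the Hadamard set, delivering the result without needing to close a recursion.
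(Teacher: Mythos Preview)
Your approach has a genuine gap: it never uses the two ingredients on which the paper's proof rests, namely the decomposition $\widehat{\Phi}=\widehat{\Phi}^+ + \widehat{\Phi}^-$ and the explicit structure of $\mathscr{D}$ as the span of coherent vectors $e^{\ii\widehat{\Phi}(g)^c}\Omega$ with $\widehat{\Phi}^-\Omega=0$. Without these, both your ``main piece'' and your ``correction pieces'' are left unresolved.

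The treatment of the main piece $\langle\psi|:\widehat{\Phi}(\x)\widehat{\Phi}(\y):\widehat{L}^2\psi\rangle$ is circular. You propose to deduce its smoothness from the Hadamard property of $\widehat{L}^2\psi$, which you obtain by ``recursively applying the present lemma''; but that \emph{is} the lemma you are proving, merely with $\widehat{L}$ replaced by $\widehat{L}^2$, and running your scheme on $\widehat{L}^2$ produces a new main piece involving $\widehat{L}^4$, and so on, with no decreasing induction parameter. The polarization step is also not free: that finite linear combinations of Hadamard vectors are Hadamard is exactly the kind of off-diagonal control you are trying to establish (the footnote in the paper covers only the quasi-free case). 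The correction pieces have a parallel problem: after your commutations you are left with terms such as $\langle\psi|\widehat{A}(\x)\widehat{L}\,\widehat{\Phi}(\y)\psi\rangle$ or $\langle\widehat{\Phi}(\x)\psi|\widehat{A}(\y)\widehat{L}\psi\rangle$, in which a sharp field still acts directly on $\psi$; but $\widehat{\Phi}(\y)\psi$ is only a vector-valued distribution, so smoothness in that variable is not obtained.

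The paper bypasses all of this by splitting $:\widehat{\Phi}(\x)\widehat{\Phi}(\y):$ into $\widehat{\Phi}^\pm$ pieces and, for each summand $e^{\ii\widehat{\Phi}(g)^c}\Omega$ of $\psi$, pushing every $\widehat{\Phi}^-$ to the right until it annihilates $\Omega$ and every $\widehat{\Phi}^+$ to the left until it annihilates $\langle\Omega|$. Each push past $\widehat{L}$ or $e^{\ii\widehat{\Phi}(g)^c}$ produces only the commutators controlled by hypotheses (i)--(ii) and by Lemma~\ref{LemmaComm}, so \emph{every} sharp field is absorbed into a smooth commutator and no residual ``main piece'' ever arises. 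Your microlocal alternative is too schematic to close the gap: conjugation by a general bounded $\widehat{L}$ has no wave-front calculus, and hypotheses (i)--(ii) are smoothness statements about commutators, not symbol-class information.
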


\begin{proof}[Proof of lemma]
We denote the two-point distribution in the state $\psi_c$ by $\omega_2^c(f,g) =\langle\psi_c |\widehat{\Phi}(f) \widehat{\Phi}(g) \psi_c\rangle$. To verify whether this expression has Hadamard form, it suffices to replace the product $\widehat{\Phi}(f) \widehat{\Phi}(g)$ by its normal ordered counterpart with respect to the initial state $\psi$, and seek to verify that the expression
\begin{equation}
\langle \psi_c | : \widehat{\Phi}(\x) \widehat{\Phi}(\y) : \psi_c \rangle \in C^\infty(M \times M).
\label{NormOrd}
\end{equation}

If \eqref{NormOrd} holds, then the state vector $\psi_c$ is Hadamard. This follows from the normal ordering prescription $
\widehat{\Phi}(\x) \widehat{\Phi}(\y) = : \widehat{\Phi}(\x) \widehat{\Phi}(\y) : +   \left[ \widehat{\Phi}^-(\x), \widehat{\Phi}^+(\y) \right]$ and the relation
\begin{align}
\left[ \widehat{\Phi}^-(\x), \widehat{\Phi}^+(\y) \right] & = \langle \psi | \left[ \widehat{\Phi}^-(\x), \widehat{\Phi}^+(\y) \right] \psi \rangle \widehat{\I} \nonumber \\
& = \langle \Omega | \widehat{\Phi}(\x) \widehat{\Phi}(\y) \Omega \rangle \widehat{\I}
\label{NormOrdHad}
\end{align}
that holds because $\left[ \widehat{\Phi}^-(\x), \widehat{\Phi}^+(\y) \right]$ is a c-bidistribution times the identity and the fact that the right-hand side of eq. \eqref{NormOrdHad} is itself Hadamard. (See the proof of lemma \ref{LemmaComm} and {\it cf.} eq. \eqref{eqLem1}.)

Writing the initial state $\psi \in \mathscr{D}$ as in \eqref{lincomb}, we have that the operator $\widehat{L}$ acts as
\begin{align}
\widehat{L}: & \psi  = \sum_k c_k\ee^{\ii \widehat{\Phi}(f_k)^c} \Omega  \nonumber \\
& \mapsto \widehat{L} \psi  =  \sum_k c_k\widehat{L} \ee^{\ii \widehat{\Phi}(f_k)^c} \Omega.
\end{align}

Thus, we need only verify that expressions of the form\footnote{Actually it would suffice to verify this for $g=h$ since, by the methods of the proof of Proposition 6.1 in \cite{StrohmaierVerchWollenberg} (see also \cite{FewsterVerch} and \cite{Sanders} for related works) one may prove that, if $\psi_1$ and $\psi_2$ are Hadamard state vectors, which are quasi-free in the sense which allows also for a nonvanishing one-point function, then any linear combination is a Hadamard state vector.  (We thank Christopher Fewster for pointing this out to us.)}
\begin{equation}
\langle \widehat{L} \ee^{\ii \widehat{\Phi}(g)^c} \Omega  |  : \widehat{\Phi}(\x) \widehat{\Phi}(\y) : \widehat{L} \ee^{\ii \widehat{\Phi}(h)^c} \Omega \rangle/\langle \widehat{L} \psi  |  \, \widehat{L} \,  \psi \rangle
\end{equation}
are smooth, which can be done by showing that each of the expressions
\begin{subequations}
\begin{align}
\omega_2^{++}(\x, \y) & = \frac{\langle \widehat{L} \ee^{\ii \widehat{\Phi}(g)^c} \Omega  |   \widehat{\Phi}^+(\x) \widehat{\Phi}^+(\y) \, \widehat{L} \, \ee^{\ii \widehat{\Phi}(h)^c} \Omega \rangle}{\langle \widehat{L} \psi  |  \, \widehat{L} \,  \psi \rangle} \\
\omega_2^{+-}(\x, \y) & = \frac{\langle \widehat{L} \ee^{\ii \widehat{\Phi}(g)^c} \Omega  |  \widehat{\Phi}^+(\x) \widehat{\Phi}^-(\y) \, \widehat{L} \, \ee^{\ii \widehat{\Phi}(h)^c} \Omega \rangle}{\langle \widehat{L} \psi  |   \, \widehat{L} \,  \psi \rangle} \\
\omega_2^{--}(\x, \y) & = \frac{\langle \widehat{L} \ee^{\ii \widehat{\Phi}(g)^c} \Omega  |  \widehat{\Phi}^-(\x) \widehat{\Phi}^-(\y) \, \widehat{L} \, \ee^{\ii \widehat{\Phi}(h)^c} \Omega \rangle}{\langle \widehat{L} \psi  |   \, \widehat{L} \,  \psi \rangle}
\end{align}
\end{subequations}

\noindent is smooth. We shall show explicitly that $\omega_2^{+-}(\x, \y) \in C^\infty(M \times M)$. The rest of the calculations are similar. By the same techniques as before,
\begin{align}
\omega_2^{+-}(\x, \y) = \frac{\langle \Omega  | [\ee^{-\ii \widehat{\Phi}(g)} \widehat{L}, \widehat{\Phi}^+(\x)] [\widehat{\Phi}^-(\y),  \widehat{L}  \ee^{\ii \widehat{\Phi}(h)}] \Omega \rangle }{ \langle \widehat{L} \psi  |   \, \widehat{L} \,  \psi \rangle},
\end{align}
and, using the commutator relation $\left[\widehat{\Phi}^\pm(\x), \ee^{\ii \widehat{\Phi}(f)^c} \right] = \ii \left[\widehat{\Phi}^\pm(\x), \widehat{\Phi}^\mp(f)\right] \ee^{\ii \widehat{\Phi}(f)^c}$, we have that
\begin{align}
& \omega_2^{+-}(\x, \y)  = \left( \langle L \psi  |  \, \widehat{L} \,  \psi \rangle \right)^{-1} \nonumber \\ 
& \times \langle \Omega | \ee^{-\ii \widehat{\Phi}(g)^c} \left( \ii\left[\widehat{\Phi}^+(\x), \widehat{\Phi}^-(g) \right] \widehat{L} + \left[\widehat{L}, \widehat{\Phi}^+(\x) \right]  \right) \nonumber \\
& \times \left(-\left[\widehat{L}, \widehat{\Phi}^-(\y) \right] + \ii \left[\widehat{\Phi}^-(\y), \widehat{\Phi}^+(h) \right] \widehat{L} \right) \ee^{\ii \widehat{\Phi}(h)^c} \Omega \rangle.
\label{smooth+-}
\end{align}

By lemma \ref{LemmaComm} in appendix \ref{app:A1}, $\left[\widehat{\Phi}^+(\x), \widehat{\Phi}^-(g) \right]$ and $\left[\widehat{\Phi}^-(\y), \widehat{\Phi}^+(h) \right]$ define smooth functions and from condition {\it (i)} we conclude that $\omega_2^{+-}\in C^\infty(M\times M)$. A similar argument for $\omega_2^{++}$ and $\omega_2^{--}$ using conditions {\it (i)} and {\it (ii)} completes the proof.
\end{proof}

The next step in the proof is that our collapse operator $L_\z$ defined by eq. \eqref{Lz} satisfies the hypotheses of lemma \ref{Lem:Main1}. Indeed, conditions {\it (i)} and {\it (ii)} hold by the product rule of the commutator. For condition {\it (i)}, we have that
\begin{subequations}
\begin{align}
\left[\widehat{L}_\z, \widehat{\Phi}^\pm(f) \right]  & = \int_M d\vol(\x) \,f(\x) \, \left[\widehat{\Phi}^\mp(F), \widehat{\Phi}^\pm(\x) \right] \nonumber \\
& \times \left(-2 \alpha \left(\widehat{\Phi}(F) - Z \widehat{\I} \right) \right) \widehat{L}_\z, 
\label{ProdNot}
\end{align}
\end{subequations}
which is a smooth function times an operator by lemma \ref{LemmaComm} in appendix \ref{app:A1}.

For condition {\it (ii)} of lemma \ref{Lem:Main1}, we have
\begin{align}
& \left[\left[\widehat{L}_\z, \widehat{\Phi}^\pm(f) \right], \widehat{\Phi}^\pm(g)\right]  = \int_{M \times M} \!\!\!\!\!\! d\vol(\x) d\vol(\y) \,f(\x) \, g(\y)\nonumber \\
& \times  \left[\widehat{\Phi}^\mp(F), \widehat{\Phi}^\pm(\x) \right] \left[\widehat{\Phi}^\mp(F), \widehat{\Phi}^\pm(\y) \right] \nonumber \\
& \times  \left\{-2\alpha + \left[-2 \alpha  \left(\widehat{\Phi}(F) - Z \widehat{\I} \right) \right]^2 \right\}\widehat{L}_\z
\end{align}
which is a smooth bi-function times an operator by lemma \ref{LemmaComm} in appendix \ref{app:A1}. This concludes the proof of the first item of our theorem.

\vspace*{0.2cm}
\textbf{Proof of item } {\it 2}.
\vspace*{0.2cm}

We now show item {\it 2} of our theorem. Namely, that $\mathscr{D} \subset \mathscr{G} \subset \mathscr{H}$, where $\mathscr{G}$ contains vectors of the form
\begin{equation}
\psi = \sum_{n = 1}^N \psi_n = \sum_{n = 1}^N \prod_{k=0}^{n} \widehat{L}_{\z_k} \psi,
\end{equation}
where $\psi \in \mathscr{D}$.

To see that $\mathscr{D} \subset \mathscr{G}$, it suffices to notice that $\widehat{L}_z$ with $Z = 0$ and $\widehat{\Phi}(f) = 0$ ({\it e.g.} by demanding that the field is smeared against the zero function itself, or vanishes weakly as $\widehat{\Phi}((\Box - m^2 - \xi R) g) = 0$) is equal to the identity on the Hilbert space. Hence, all the vectors $\psi \in \mathscr{D}$ also belong to $\mathscr{G}$. This, in turn, guarantees that $\mathscr{G}$ is dense in $\mathscr{H}$.

We now show that $\psi \in \mathscr{G}$ defined, by
\begin{equation}
\psi = \sum_{n = 1}^N \alpha_n \psi_n = \sum_{n=1}^N \alpha_n \prod_{k=0}^{n} \widehat{L}_{\z_k} \psi,
\end{equation}
for $\psi \in \mathscr{D}$ and $\alpha_n \in \mathbb{C}$, such that $\langle \psi | \psi \rangle = 1$, is a Hadamard state vector. Once more, the strategy is to show that
\begin{align}
&\langle \psi | : \Phi(\x) \Phi(\y) : \psi \rangle \nonumber \\
&= \sum_{n=1}^N \sum_{m = 1}^N \overline{\alpha_n} \alpha_m \Big\langle \prod_{i=0}^{n} \widehat{L}_{\z_i} \psi \Big| : \widehat{\Phi}(\x) \widehat{\Phi} (\y) : \prod_{j = 0}^m \widehat{L}_{\z_j} \psi \Big\rangle
\end{align}
contributes smoothly to the two-point function. It suffices to show that each of the terms in
\begin{align}
\Big\langle \prod_{i=0}^{n} \widehat{L}_{\z_i} \psi \Big| \left(\widehat{\Phi}^+(\x) \widehat{\Phi}^+(\y) + \widehat{\Phi}^+(\x) \widehat{\Phi}^-(\y) \right. \nonumber \\
\left. + \widehat{\Phi}^+(\y) \widehat{\Phi}^-(\x) + \widehat{\Phi}^-(\x) \widehat{\Phi}^-(\y) \right) \prod_{j = 0}^m \widehat{L}_{\z_j} \psi \Big\rangle
\label{Multi1}
\end{align}
is a smooth bi-function.

We show this for the second term in \eqref{Multi1}. The rest of the calculations are similar. Let
\begin{equation}
\omega_{nm}^{+-}(\x, \y) = \Big\langle \prod_{i=0}^{n} \widehat{L}_{\z_i} \psi \Big| \widehat{\Phi}^+(\x) \widehat{\Phi}^-(\y) \prod_{j = 0}^m \widehat{L}_{\z_j} \psi \Big\rangle.
\end{equation}

We expand $\omega_{nm}^{+-}(\x, \y)$ as the sum of four terms, $\omega_{nm}^{+-}(\x, \y) = \omega_{nm}^{+-(1)}(\x, \y) + \omega_{nm}^{+-(2)}(\x, \y) + \omega_{nm}^{+-(3)}(\x, \y) + \omega_{nm}^{+-(4)}(\x, \y)$, where
\begin{subequations}
\begin{align}
\omega_{nm}^{+-(1)}(\x, \y) &  = \Big\langle \prod_{i=0}^{n} \widehat{L} _{\z_i} \widehat{\Phi}^-(\x) \psi \Big| \prod_{j = 0}^m \widehat{L}_{\z_j} \widehat{\Phi}^-(\y) \psi \Big\rangle \\
\omega_{nm}^{+-(2)}(\x, \y) &  = \Big\langle \prod_{i=0}^{n} \widehat{L} _{\z_i} \widehat{\Phi}^-(\x) \psi \Big|  \Big[\widehat{\Phi}^-(\y), \prod_{j = 0}^m \widehat{L}_{\z_j}\Big] \psi \Big\rangle \\
\omega_{nm}^{+-(3)}(\x, \y) &  = \Big\langle \Big[\widehat{\Phi}^-(\x), \prod_{i=0}^{n} \widehat{L} _{\z_i} \Big] \psi \Big|   \prod_{j = 0}^m \widehat{L}_{\z_j} \widehat{\Phi}^-(\y) \psi \Big\rangle \\
\omega_{nm}^{+-(4)}(\x, \y) &  = \Big\langle \Big[\widehat{\Phi}^-(\x), \prod_{i=0}^{n} \widehat{L} _{\z_i} \Big] \psi \Big|  \Big[\widehat{\Phi}^-(\y), \prod_{j = 0}^m \widehat{L}_{\z_j}\Big] \psi \Big\rangle
\end{align}
\label{4Terms}
\end{subequations}
and each of the commutators expand as
\begin{align}
& \Big[\widehat{\Phi}^-(\x), \prod_{i=0}^{n} \widehat{L} _{\z_i} \Big]  = \left[\widehat{\Phi}^-(\x), \widehat{L} _{\z_0} \right] \prod_{i=1}^{n} \widehat{L} _{\z_i} \nonumber \\
& + \sum_{i=1}^{n-1} \! \prod_{j=0}^{i-1} \! \widehat{L} _{\z_j} \! \left[\widehat{\Phi}^-(\x),  \widehat{L} _{\z_i} \right] \!\!\! \prod_{k=i+1}^{n} \widehat{L} _{\z_k} \! \!
 +  \!\! \prod_{i = 0}^{n-1} \!\! \widehat{L} _{\z_i} \left[\widehat{\Phi}^-(\x), \widehat{L} _{\z_n} \right].
\end{align}

At this stage, we know that each of the $\left[\widehat{\Phi}^-(\x), \widehat{L} _{\z_i} \right]$ is a smooth function times an operator, and following a strategy analogous to the proof of lemma \ref{Lem:Main1}, we conclude that each of the terms defined by the equations \eqref{4Terms} is smooth. From here, it follows that $\omega_{nm}^{+-} \in C^\infty(M \times M)$. A similar strategy shows that each of the terms on the right-hand side of eq. \eqref{Multi1} contributes smoothly and hence
$\langle \psi| : \widehat{\Phi}(\x) \widehat{\Phi}(\y) : \psi \rangle \in C^\infty(M \times M)$. Thus, we conclude that $\psi$ is a Hadamard state vector.

To complete the proof, we notice that for $\psi \in \mathscr{G}$, $\psi_\z = \widehat{L}_\z\psi /\langle \widehat{L}_\z \psi  \, | \, \widehat{L}_\z \psi \rangle^{1/2}$ is a normalized state in $\mathscr{G}$ and, hence, it is also Hadamard.

\end{proof}

\subsection{A perturbative result for higher order collapse generators}
\label{subsec:Pert}

In the previous subsection, our results dealt only with collapse operators with linear collapse generators.  In this subsection, we show how to deal with collapse operators whose collapse generators are higher order polynomials in a perturbative way. No claims will be made about the convergence of the pertubative expressions. The style of this subsection will be less rigorous; through a series of formal manipulations, we shall show that the post-collapse state has the Hadamard property order by order in a perturbative way. Thus, for the truncated perturbation series, the Hadamard condition is then recovered exactly. This is our second main result, which is summarized in theorem \ref{MainThm2}.

Let us suppose that we have a collapse generator given by a polynomial of degree $N$, $Q[\Phi, f_\z] = \sum_{i = 1}^N \prod_{j = 1}^i \Phi\left(P_g^{ij} f_\z\right)$, where the $P_g^{ij} f_\z$ are covariant expressions as before. For example, for $N = 3$,
\begin{align}
Q[\Phi, f_\z] & = \Phi(P_g^{11} f_\z) + \Phi(P_g^{21} f_\z)\Phi(P_g^{22} f_\z) \nonumber \\
& + \Phi(P_g^{31} f_\z) \Phi(P_g^{32} f_\z) \Phi(P_g^{33} f_\z).
\end{align}

It would be desireable to have a result such as the one stated in theorem \ref{MainThm}. The hypotheses of lemma \eqref{Lem:Main1}, however, need not hold due to convergence issues. Namely, the commutators in {\it (i)} and {\it (ii)} can be written only as formal series. But formal expressions are available for the commutators that produce two-point functions with the Hadamard ultraviolet behaviour, as we verify in this subsection. We make use of the following nested commutator notation:
\begin{equation}
\ad_X^n Y = \underbrace{[X, [X, \cdots [X, Y] \cdots]]}_{n \text{ commutators}},
\label{Nest}
\end{equation}
with the convention that $\ad_X^0 Y = Y$.

As before, we denote the GNS field representation of the Klein-Gordon field with respect to some Hadamard state by  $\widehat{\Phi}(f)= \pi(\Phi(f)) = (\ii\, \widehat{a}(KEF) - \ii\, \widehat{a}^*(KEf))$, as a linear operator acting on the (dense subset of the) Hilbert space $\mathscr{H}$. Once more, we denote the positive polarization of the field in this representation by $\widehat{\Phi}^+(f) = - \ii\, \widehat{a}^*(KEf)$ and the negative polarization by  $\widehat{\Phi}^-(f) = \ii\, \widehat{a}(KEf)$.

Let the collapse operator of the theory be $\widehat{L}_\z^N = \exp\left[-\alpha\Big(\widehat{Q}[\widehat{\Phi}, f_\z] - Z \widehat{\I} \Big)^2 \right]$. Then the commutators with the positive and negative polarizations of the field representation, $\widehat{\Phi}^+(f)$ and $\widehat{\Phi}^-(f)$ respectively, are 
\begin{widetext}
\begin{subequations}
\begin{align}
 \left[\widehat{L}_\z^N, \widehat{\Phi}^\pm(f) \right] &  = \int_M \! d\vol(\x) \, f(\x)  \left( \sum_{n = 1}^\infty \ad_{\left[-\alpha\left(\widehat{Q}\left[\widehat{\Phi}, f_\z \right] - Z \widehat{\I} \right)^2\right]}^n \widehat{\Phi}^\pm(\x) \right) \widehat{L}_\z^N, \label{PolComm1} \\
\left[\left[\widehat{L}_\z^N, \widehat{\Phi}^\pm(f) \right], \widehat{\Phi}^\pm(g)\right]  
& =  \int_M \!\!\!\! d\vol(\x) \, f(\x) g(\y) \left\{ \sum_{n = 1}^\infty \left[  \ad_{\left[-\alpha\left(\widehat{Q}\left[\widehat{\Phi}, f_\z \right] - Z \widehat{\I} \right)^2\right]}^n \widehat{\Phi}^\pm(\x), \widehat{\Phi}^\pm(\y) \right]  \right. \nonumber \\
& +   \left. \sum_{n = 1}^\infty \sum_{m = 1}^\infty \left( \ad_{\left[-\alpha\left(\widehat{Q}\left[\widehat{\Phi}, f_\z \right] - Z \widehat{\I} \right)^2\right]}^n \widehat{\Phi}^\pm(\x)\right)  \left( \ad_{\left[-\alpha\left(\widehat{Q}\left[\widehat{\Phi}, f_\z \right] - Z \widehat{\I} \right)^2\right]}^m \widehat{\Phi}^\pm(\y) \right)  \right\} \widehat{L}_\z^N.
\label{PolComm2}
\end{align}
\label{PolComm}
\end{subequations}
\end{widetext}
Eq. \eqref{PolComm1} follows from the ``adjoint-to-commutators" formal relation $\ee^X Y \ee^{-X} = \sum_{n = 0}^\infty \ad_X^n Y$, and hence the integrands appearing on the right-hand side of eq. \eqref{PolComm} should be understood as formal expressions. Still, we can show that these formal expressions contribute as smooth functions in the integrand on the right-hand side of \eqref{PolComm}. First, notice that, for each $n \in \mathbb{N}$, $\x \mapsto \ad_{\left[-\alpha\left(\widehat{Q}\left[\widehat{\Phi}, f_\z \right] - Z \widehat{\I} \right)^2\right]}^n  \widehat{\Phi}^\pm(\x)$ is an operator valued smooth function. This follows from the (formal) commutator
\begin{widetext}
\begin{align}
 \left[ -\alpha\left(\widehat{Q}\left[\widehat{\Phi}, f_\z \right] - Z \widehat{\I} \right)^2 , \widehat{\Phi}^\pm(\x) \right] & = -\alpha \sum_{i = 1}^N \sum_{j=1}^i \left[\widehat{\Phi}^\mp\left(P_g^{ij} f_\z\right) ,  \widehat{\Phi}^\pm(\x) \right] \nonumber \\
& \times  \left\{ \left(\widehat{Q}\left[\widehat{\Phi}, f_\z \right] - Z \widehat{\I} \right)  \prod_{k = 1, k \neq j}^i \widehat{\Phi}\left(P_g^{ik} f_\z\right)+ \prod_{k = 1, k \neq j}^i \widehat{\Phi}\left(P_g^{ik} f_\z\right) \left(\widehat{Q}\left[\widehat{\Phi}, f_\z \right] - Z \widehat{\I} \right)  \right\}
\label{SmoothPoly}
\end{align}
\end{widetext}
and lemma \ref{LemmaComm}, which establishes that $\left[\widehat{\Phi}^\mp\left(P_g^{ij} f_\z\right) ,  \widehat{\Phi}^\pm(\x) \right]$ is a smooth function times the identity operator. By the same argument, the second term on the right-hand side of eq. \eqref{PolComm2} contributes smoothly, and we need only verify the smoothness of the first term. To this end, we can use lemma \ref{LemmaInductiveComm} in appendix \ref{app:A} to write
\begin{widetext}
\begin{align}
\left[\left[\widehat{L}_\z^N, \widehat{\Phi}^\pm(f) \right], \widehat{\Phi}^\pm(g)\right]  & =  -\int_M \!\!\!\! d\vol(\x) \, f(\x) g(\y) \, \sum_{n = 0}^\infty  \nonumber \\
& \times \Bigg\{ \sum_{m=0}^{n-1} \! \ad^m_{\left[-\alpha\left(\widehat{Q}\left[\widehat{\Phi}, f_\z \right] - Z \widehat{\I} \right)^2\right]}  \left[ \left[\widehat{\Phi}^\pm(\y), -\alpha\left(\widehat{Q}\left[\widehat{\Phi}, f_\z \right] - Z \widehat{\I} \right)^2\right], \ad_{\left[-\alpha\left(\widehat{Q}\left[\widehat{\Phi}, f_\z \right] - Z \widehat{\I} \right)^2\right]}^{n-m-1} \widehat{\Phi}^\pm(\x)  \right]\nonumber \\
& - \left.  \sum_{m = 0}^\infty \left( \ad_{\left[-\alpha\left(\widehat{Q}\left[\widehat{\Phi}, f_\z \right] - Z \widehat{\I} \right)^2\right]}^n  \widehat{\Phi}^\pm(\x) \right) \left( \ad_{\left[-\alpha\left(\widehat{Q}\left[\widehat{\Phi}, f_\z \right] - Z \widehat{\I} \right)^2\right]}^m  \widehat{\Phi}^\pm(\y) \right)  \right\} \widehat{L}_\z^N.
\label{DoubleComm}
\end{align}
\end{widetext}

All the expressions on the right-hand side of eq. \eqref{DoubleComm} can be expanded, {\it cf.} \eqref{SmoothPoly}, into contributions that are smooth by lemma \ref{LemmaComm} in appendix \ref{app:A}. In particular, by our previous arguments, we see that the form of the commutators \eqref{PolComm} is
\begin{widetext}
\begin{subequations}
\begin{align}
  \left[\widehat{L}_\z^N, \widehat{\Phi}^\pm(f) \right]  & = \int_M \! d\vol(\x) \, f(\x) \, \sum_{i = 1}^N \sum_{j=1}^i \left[\widehat{\Phi}^\mp\left(P_g^{ij} f_\z\right) ,  \widehat{\Phi}^\pm(\x) \right] \left( \sum_{n = 1}^\infty R_{\alpha^n}^{ij}(\Phi, f_\z) \widehat{L}_\z^N \right),  \\
 \left[\left[\widehat{L}_\z^N, \widehat{\Phi}^\pm(f) \right], \widehat{\Phi}^\pm(g)\right]  
& =  \int_M \!\!\!\! d\vol(\x) \, f(\x) g(\y) \, \sum_{i = 1}^N \sum_{k = 1}^N \sum_{j=1}^i  \sum_{l=1}^k \left[\widehat{\Phi}^\mp\left(P_g^{ij} f_\z\right) ,  \widehat{\Phi}^\pm(\x) \right]  \nonumber \\
& \times  \left[\widehat{\Phi}^\mp\left(P_g^{kl} f_\z\right) ,  \widehat{\Phi}^\pm(\y) \right] \left( \sum_{n = 1}^ \infty \tilde{R}_{\alpha^n}^{ijkl}(\Phi, f_\z) \widehat{L}_\z^N \right).
\end{align}
\end{subequations}
\end{widetext}

Here, each one of the $R_{\alpha^n}^{ij}(\Phi, f_\z)$ and $\tilde{R}_{\alpha^n}^{ijkl}(\Phi, f_\z)$ can be calculated perturbatively by an expansion in the parameter $\alpha$. Thus, truncating the series at an arbitrary power of $\alpha$, and one sees that the singular behaviour of
\begin{equation}
\omega_2^f(\x,\y) = \frac{\langle \psi |  \widehat{L}_\z^N \widehat{\Phi}(\x) \widehat{\Phi}(\y) \widehat{L}_\z^N \psi_i \rangle}{ \langle \psi_i | \widehat{L}_\z^N \widehat{L}_\z^N \psi \rangle}
\end{equation}
is Hadamard to the prescribed power of $\alpha$ if $|\psi \rangle$ is an initial Hadamard state. We collect these observations in our second main theorem:

\begin{thm}
Let $(\pi, \cal{D} \subset \mathscr{H}, \Omega)$ be the GNS triple of the Klein-Gordon field theory with respect to an algebraic quasi-free Hadamard state, $\omega$, and let the domain $\mathscr D$ be defined as in Theorem \ref{MainThm}.  Let $\widehat{L}_\z^N\in \mathscr{L}(\mathscr{H})$ be a self-adjoint operator on the Hilbert space, defined as $\widehat{L}_\z^N = \exp\left[-\alpha\Big(\widehat{Q}[\widehat{\Phi}, f_\z] - Z \widehat{\I} \Big)^2 \right]$ with $Q[\Phi, f_\z] = \sum_{i = 1}^N \prod_{j = 1}^i \Phi\left(P_g^{ij} f_\z\right)$, for fixed $N \in \mathbb{N}$, $\alpha >0$, $Z \in \mathbb{R}$ and with $\widehat{\Phi}(P^{ij}_g f_\z)$ the representation of $\Phi(P^{ij}_g f_\z)$ (a covariantly smeared polynomial collapse generator in $f_\z$ in collapse model applications). Then, if $\psi \in {\rm Dom}\left(\widehat{L}_\z^N\right)$ is a Hadamard state, it follows that, for $\psi_\z^N = \widehat{L}_\z^N\psi/\langle  \widehat{L}_\z^N \psi \, | \, \widehat{L}_\z^N \psi \rangle^{1/2}$,
\begin{align}
\langle \psi_\z^N |\widehat{\Phi}(\x) \widehat{\Phi}(\y) \psi_\z^N \rangle - \langle \psi |\widehat{\Phi}(\x) \widehat{\Phi}(\y) \psi \rangle & = \sum_{n=0}^M \alpha^M G_M(\x, \y) \nonumber \\
& + O\left( \alpha^{M+1}\right)
\end{align}
for all $M \in \mathbb{N}$ and $G_M(\x,\y) \in C^\infty(M \times M)$.
 \qed
\label{MainThm2}
\end{thm}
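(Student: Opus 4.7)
The plan is to treat Theorem \ref{MainThm2} as a perturbative refinement of Theorem \ref{MainThm}: rather than requiring the adjoint-to-commutators series to converge, I would truncate every formal power series in $\alpha$ at the prescribed order $M$, manipulate only finite sums rigorously, and read off each coefficient $G_n$ as a smooth bi-function. First I would rewrite
\begin{equation*}
\langle \psi_\z^N | \widehat{\Phi}(\x)\widehat{\Phi}(\y) \psi_\z^N\rangle = \frac{\langle \psi | \widehat{L}_\z^N \widehat{\Phi}(\x)\widehat{\Phi}(\y) \widehat{L}_\z^N \psi\rangle}{\langle \psi | (\widehat{L}_\z^N)^2 \psi\rangle}
\end{equation*}
and split $\widehat{\Phi}(\x)\widehat{\Phi}(\y)$ into polarization components plus the c-number $[\widehat{\Phi}^-(\x), \widehat{\Phi}^+(\y)]\widehat{\I}$, exactly as in the proof of Lemma \ref{Lem:Main1}. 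The c-number piece contributes the Hadamard singularity $\langle \Omega | \widehat{\Phi}(\x)\widehat{\Phi}(\y) \Omega\rangle$ identically to both the pre-collapse and post-collapse two-point functions and so cancels in the difference, leaving the task of controlling, order by order in $\alpha$, the difference of the normal-ordered expectations.

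Second, I would commute each $\widehat{L}_\z^N$ through the polarizations $\widehat{\Phi}^\pm(\x)$ and $\widehat{\Phi}^\pm(\y)$ by means of the formal identities \eqref{PolComm1}--\eqref{PolComm2}. The key structural fact, already made explicit in \eqref{SmoothPoly}, is that a single application of $\ad_{-\alpha(\widehat{Q}-Z)^2}$ to $\widehat{\Phi}^\pm(\x)$ produces a finite sum, each term of which consists of a c-number commutator $[\widehat{\Phi}^\mp(P_g^{ij} f_\z), \widehat{\Phi}^\pm(\x)]$, smooth in $\x$ by Lemma \ref{LemmaComm}, multiplied by a polynomial in smeared fields, together with an explicit factor of $\alpha$. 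Iterating $n$ times therefore yields an $\alpha^n$-suppressed finite sum of smooth bi-functions of $(\x,\y)$ times field polynomials acting with $\widehat{L}_\z^N$.

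Third, I would truncate both the numerator and the denominator at total degree $\alpha^M$ (the denominator being a formal series in $\alpha$ with constant term $\langle \psi | \psi\rangle = 1$, and hence invertible as a formal series). The truncated numerator is then a finite linear combination of smooth bi-functions of $(\x,\y)$ with numerical coefficients of the form $\langle \psi | (\text{polynomial in smeared fields})\, \widehat{L}_\z^N \psi\rangle$, all finite because $\widehat{L}_\z^N$ is bounded and $\psi$ lies in a domain on which smeared field polynomials act (as in Theorem \ref{MainThm}, the natural choice being a finite linear combination of coherent states on some fixed quasi-free Hadamard $\Omega$). Dividing by the truncated inverse denominator, subtracting $\langle \psi | \widehat{\Phi}(\x)\widehat{\Phi}(\y)\psi\rangle$, and collecting by powers of $\alpha$ produces the asserted expansion $\sum_{n=0}^M \alpha^n G_n(\x,\y) + O(\alpha^{M+1})$ with each $G_n \in C^\infty(M \times M)$.

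The main obstacle is precisely the convergence issue flagged in the statement: for polynomial $\widehat{Q}$ of degree $N \ge 2$, the nested $\ad$ actions generate unboundedly many field factors and there is no norm-topology control on the full adjoint-to-commutators series. The theorem sidesteps this by claiming only order-by-order smoothness. Every truncation at order $\alpha^M$ produces a rigorous finite identity whose smooth part extracts a well-defined asymptotic expansion of the post-collapse two-point function around the initial one, and whose remainder is uniformly $O(\alpha^{M+1})$; since $M$ is arbitrary, the theorem follows.
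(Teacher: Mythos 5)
Your proposal follows essentially the same route as the paper: split the two-point function into polarization parts plus the c-number commutator, expand the commutators of $\widehat{L}_\z^N$ with $\widehat{\Phi}^\pm$ via the formal adjoint series, use the smoothness of the elementary commutators $[\widehat{\Phi}^\mp(P_g^{ij}f_\z),\widehat{\Phi}^\pm(\x)]$ (Lemma \ref{LemmaComm}) to identify each order in $\alpha$ as a smooth bi-function times field polynomials, and truncate at order $\alpha^M$, which is exactly the paper's argument (including its acknowledged non-rigorous treatment of the full series). Your explicit remarks about the cancellation of the singular c-number part in the difference and the formal-series inversion of the normalization are minor refinements of, not departures from, the paper's proof.
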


\subsection{An example of the post-collapse renormalized energy-momentum tensor}
\label{subsec:Ex}

As an application of our results, we work out the renormalized energy momentum tensor in the post-collapse state $\psi_f  = \widehat{L}_\z\psi_i /\langle \widehat{L}_\z \psi_i  \, | \, \widehat{L}_\z \psi_i \rangle^{1/2}$, when the initial Hadamard state is a vacuum state of the theory $|\psi_i \rangle = | \Omega \rangle$ and in the simple case that $\widehat{L}_\z = \exp\left(-\alpha\left(\widehat{\Phi}(f_\z) - Z \widehat{\I}\right)^2 \right)$. The two-point function is
\begin{align}
\omega^f_2(\x,\y) & =  \omega^i_2(\x,\y) + \omega_f^{++}(\x,\y) + \omega_f^{+-}(\x,\y) \nonumber \\
& + \omega_f^{+-}(\y,\x) + \omega_f^{--}(\x,\y),
\end{align}
with $\omega^i_2(\x,\y) = \langle \Omega |[\widehat{\Phi}^-(\x),\widehat{\Phi}^+(\y)] \Omega \rangle$ and
\begin{widetext}
\begin{subequations}
\begin{align}
\omega_2^{f+-}&(\x,\y)  = \langle \Omega | \left[ \widehat{L}_\z, \widehat{\Phi}^+(\x) \right] \left[\widehat{\Phi}^-(\y), \widehat{L}_\z \right] \,\Omega \rangle / \langle \Omega | \widehat{L}_\z^2 \Omega \rangle \nonumber \\
& = \left[\widehat{\Phi}^-(f_\z),\widehat{\Phi}^+(\x) \right] \left[\widehat{\Phi}^+(f_\z),\widehat{\Phi}^-(\y) \right] \langle \psi_f | \left[-2\alpha \left(\widehat{\Phi}(f_\z) - Z\widehat{\I} \right) \right]^2 \psi_f \rangle, \\
\omega_2^{f++}&(\x,\y)  = \langle \Omega | \left[ \left[ \widehat{L}_\z, \widehat{\Phi}^+(\x) \right], \widehat{\Phi}^+(\y) \right] \widehat{L}_\z  \,\Omega \rangle / \langle \Omega | \widehat{L}_\z^2 \Omega \rangle \nonumber \\
& = \left[\widehat{\Phi}^-(f_\z),\widehat{\Phi}^+(\x) \right] \left[\widehat{\Phi}^-(f_\z),\widehat{\Phi}^+(\y) \right] \langle \psi_f | \left\{-2\alpha + \left[-2\alpha \left(\widehat{\Phi}(f_\z) - Z\widehat{\I} \right) \right]^2 \right\} \psi_f \rangle, \\
\omega_2^{f--}&(\x,\y)  = \langle \Omega | \widehat{L}_\z \left[ \left[ \widehat{L}_\z, \widehat{\Phi}^-(\y) \right], \widehat{\Phi}^-(\x) \right] \,\Omega \rangle / \langle \Omega | \widehat{L}_\z^2 \Omega \rangle \nonumber \\
& = \left[\widehat{\Phi}^+(f_\z),\widehat{\Phi}^-(\y) \right] \left[\widehat{\Phi}^+(f_\z),\widehat{\Phi}^-(\x) \right] \langle \psi_f | \left\{-2\alpha + \left[-2\alpha \left(\widehat{\Phi}(f_\z) - Z\widehat{\I} \right) \right]^2 \right\} \psi_f \rangle,
\end{align}
\end{subequations}
\end{widetext}
\noindent and collecting the terms we have that
\begin{widetext}
\begin{align}
\omega_2^f(\x,\y) & =  \langle \Omega |[\widehat{\Phi}^-(\x),\widehat{\Phi}^+(\y)] \Omega \rangle + \langle \psi_f |\left[-2\alpha \left(\widehat{\Phi}(f_\z) - Z\widehat{\I} \right) \right]^2 \psi_f \rangle \nonumber \\
& \times   \Big(\left[\widehat{\Phi}^-(f_\z),\widehat{\Phi}^+(\x) \right] \left[\widehat{\Phi}^-(f_\z),\widehat{\Phi}^+(\y) \right] + \left[\widehat{\Phi}^-(f_\z),\widehat{\Phi}^+(\x) \right] \left[\widehat{\Phi}^+(f_\z),\widehat{\Phi}^-(\y) \right] \nonumber \\
& + \left[\widehat{\Phi}^-(f_\z),\widehat{\Phi}^+(\y) \right] \left[\widehat{\Phi}^+(f_\z),\widehat{\Phi}^-(\x) \right] + \left[\widehat{\Phi}^+(f_\z),\widehat{\Phi}^-(\y) \right] \left[\widehat{\Phi}^+(f_\z),\widehat{\Phi}^-(\x) \right] \Big)\nonumber \\
& -2 \alpha \left(\left[\widehat{\Phi}^-(f_\z),\widehat{\Phi}^+(\x) \right] \left[\widehat{\Phi}^-(f_\z),\widehat{\Phi}^+(\y) \right] + \left[\widehat{\Phi}^+(f_\z),\widehat{\Phi}^-(\y) \right] \left[\widehat{\Phi}^+(f_\z),\widehat{\Phi}^-(\x) \right] \right).
\end{align}
\end{widetext}

Let $G^\pm \in C^\infty(M)$ be the smooth function defined by $G^\pm(\x) \widehat{\I} = \left[\widehat{\Phi}^\mp(f_\z),\widehat{\Phi}^\pm(\x) \right]$. $G^+$ and $G^-$ can be defined by integrating the Wightman function, $W(\x, \y) = \langle \Omega | \widehat{\Phi}(\y) \widehat{\Phi}(\x) \Omega \rangle$, as follows
\begin{subequations}
\begin{align}
G^+(\x) &= \langle \Omega | \widehat{\Phi}^-(f_\z) \widehat{\Phi}^+(\x) \Omega \rangle = \int_M \! d\vol(\y) \, f_\z(\y) W(\y, \x), \\
G^-(\x) &= \! -\langle \Omega | \widehat{\Phi}^-(\x) \widehat{\Phi}^+(f_\z)  \Omega \rangle \! = \! -\!\! \int_M \!\!\!\! d\vol(\y) \, f_\z(\y) W(\x, \y).
\end{align}
\end{subequations}

The post-collapse renormalized energy-momentum tensor is
\begin{widetext}
\begin{align}
 \langle \psi_f | T^{{\rm ren}}_{ab}  \psi_f \rangle & = \langle \Omega | T^{{\rm ren}}_{ab} \Omega \rangle  - 2 \alpha\Big\{\nabla_a G^+  \nabla_b G^+  + \nabla_a G^-  \nabla_b G^-  - \frac{1}{2}g_{ab} \left(\nabla_cG^+  \nabla^cG^+  + \nabla_cG^-  \nabla^cG^-  \right) \nonumber \\
&   - \frac{1}{2}g_{ab} \left(m^2 + \xi R  \right) \left( (G^+)^2 + (G^-)^2\right) \Big\} + 4\alpha^2 \langle \psi_f |\left(\widehat{\Phi}(f_\z) - Z\widehat{\I} \right)^2 \psi_f \rangle \Big\{\nabla_a G^+  \nabla_b G^+   \nonumber \\
&   + \nabla_{(a} G^+  \nabla_{b)} G^-  + \nabla_a G^-  \nabla_b G^-  - \frac{1}{2}g_{ab} \left(\nabla_cG^+  \nabla^cG^+  + 2\nabla^cG^+  \nabla^cG^-  + \nabla_cG^-  \nabla^cG^-  \right) \nonumber \\
&  - \frac{1}{2}g_{ab} \left(m^2 + \xi R  \right) \left((G^+)^2 + 2 G^+  G^-  + (G^-)^2\right) \Big\}.
\label{TrenEx}
\end{align}
\end{widetext}

Importantly, the in this example it is explicit that difference $\langle \psi_f | T^{{\rm ren}}_{ab}  \psi_f \rangle  - \langle \Omega | T^{{\rm ren}}_{ab} \Omega \rangle = O(\alpha)$, so for $\alpha \ll 1$, the change in the renormalized energy momentum tensor is small. In fact, this feature is general and from this standpoint one can begin to calculate back-reaction effects in semiclassical gravity, pertubatively in $\alpha$ is necessary.

\section{ Discussion and concluding remarks}
\label{sec:Conclusion}
 
Summarizing our  present work, we have presented a class of  operators  that  can be used  as  the operators  driving the  spontaneous collapse  dynamics  in the various generally covariant dynamical reduction models that generalize the GRW model. Further, we have proven that, for a wide class of Hadamard states for our model Klein-Gordon theory, they preserve the Hadamard property. In addition, we have worked out a simple example in which the violations of energy momentum are calculated and found to be small when the parameter $\alpha$ is small.

We have left out three important issues, which should be treated in future work: First, in our first main theorem \ref{MainThm}, we have shown that Hadamard vectors states belonging to certain dense subsets of the Hilbert space of the Klein-Gordon theory are mapped to Hadamard vector states by the effect of state reduction. As we have mentioned before, it remains an open question whether every state vector in our Hilbert space is mapped into a Hadamard state by each or any $\widehat{L}_\z$. Second, the inclusion of local polynomials in the admissible collapse generators, such as $\Phi^2(P_g f_\z)$, which entails dealing with the renormalization of the collapse generator itself. Third, the treatment of generally covariant generalizations of the more sophisticated CSL model, which has a much more `canonical' flavour, but for which a notion of {\it Hadamard on a slice} will need to first be developed.

Finally, the technology developed in this paper leaves us readily at the stage at which, given a suitable formulation of (an extended form of) semiclassical gravity that accounts for state reduction processes, one can compute back-reaction effects due to the state reduction on the spacetime.

A version of this work with further  remarks (and also an introduction to non-relativistic dynamical reduction models) is available in \cite{Juarez-Aubry:2017eryv1}.

\section*{Acknowledgments}

BAJ-A acknowledges the hospitality of the University of York and the University of Nottingham, where part of this work was carried out, as well as the support of an International Mobility Award granted by the Red Tem\'atica de F\'isica de Altas Energ\'ias (Red FAE) of the Consejo Nacional de Ciencia y Tecnolog\'ia, Mexico (CONACYT). BSK acknowledges the hospitality of ICN-UNAM, Mexico City, where part of this work was carried out.  BAJ-A thanks Daniel Bedingham for very helpful conversations in Oxford. BAJ-A and BSK thank Christopher Fewster for very helpful discussions in York. BSK thanks Umberto Lupo, and DS thanks Philip Pearle for very helpful correspondence. This work was supported by CONACYT project 101712, Mexico, and by PAPIIT- UNAM grant IG100316, Mexico.

\appendix
\section{Auxiliary results}
\label{app:A}

\subsection{A smoothness lemma}
\label{app:A1}

\begin{lemma}
Let $\mathscr{A}$ be the Klein-Gordon field algebra, $\omega: \mathscr{A} \to \mathbb{C}$ an algebraic Hadamard state and $(\pi, \mathscr{D} \subset \mathscr{H}, \Omega)$ be the GNS triple of the theory. Let $\pi(\Phi(f)) = \widehat{\Phi}^+(f) + \widehat{\Phi}^-(f)$, with $\widehat{\Phi}^+(f) = - \ii\, \widehat{a}^*(KEf)$ and $\widehat{\Phi}^-(f) = (\ii \, \widehat{a}(KEf)$. Then, $\x \mapsto \left[\widehat{\Phi}^\pm(\x), \widehat{\Phi}^\mp(f)\right]$ is equal to the identity operator times a smooth function on $M$.
\label{LemmaComm}
\end{lemma}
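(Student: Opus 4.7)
\begin{proofsketch}
My plan is to first reduce the commutator to an ordinary c-number expression by direct computation, and then invoke the Hadamard property of $\omega$ to establish smoothness.

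\textbf{Step 1 (Algebraic reduction).} Using $\widehat\Phi^+(f)=-\ii\,\widehat a^*(KEf)$ and $\widehat\Phi^-(f)=\ii\,\widehat a(KEf)$, together with the CCR $[\widehat a(\xi),\widehat a^*(\chi)]=\langle\xi|\chi\rangle\widehat{\I}$, I would compute (first in the smeared version, then in the distributional one)
\begin{equation}
[\widehat\Phi^-(f),\widehat\Phi^+(g)]=\langle KEf|KEg\rangle\widehat\I,\qquad [\widehat\Phi^+(f),\widehat\Phi^-(g)]=-\langle KEg|KEf\rangle\widehat\I.
\end{equation}
Since $\widehat\Phi^-(g)\Omega=0$, one recognizes $\langle KEf|KEg\rangle=\langle\Omega|\widehat\Phi(f)\widehat\Phi(g)\Omega\rangle=\omega_2(f,g)$, the Wightman two-point function of the chosen Hadamard state. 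Leaving the $\x$ variable unsmeared, this gives
\begin{equation}
[\widehat\Phi^-(\x),\widehat\Phi^+(f)]=\omega_2(\x,f)\,\widehat\I,\qquad [\widehat\Phi^+(\x),\widehat\Phi^-(f)]=-\omega_2(f,\x)\,\widehat\I,
\end{equation}
so the lemma reduces to proving that the distribution $\x\mapsto\omega_2(\x,f)$ (and its partner with the arguments swapped) is a smooth function on $M$ whenever $f\in C_0^\infty(M)$.

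\textbf{Step 2 (Smoothness from the Hadamard form).} Using Definition~\ref{Def:Hadamard}, cover any given compact $K\subset M$ by finitely many causal normal neighborhoods, so that on each one $\omega_2(\x,\y)=\chi(\x,\y)H_\epsilon^{T,n}(\x,\y)+W^n(\x,\y)$ with $W^n\in C^n(M\times M)$ and $H_\epsilon^{T,n}$ the Hadamard parametrix. Away from the diagonal (more precisely, away from pairs joined by a null geodesic), the parametrix extends to a smooth bi-function, while the singularities of the $\sigma_\epsilon^{-1}$ and $\ln\sigma_\epsilon$ terms are concentrated on the null cone. Smearing the second argument against a fixed $f\in C_0^\infty(M)$ integrates over those singularities and yields, for each $n$, a $C^n$ function of $\x$; since $n$ is arbitrary, the result is $C^\infty(M)$. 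The cleanest way to state this is via Radzikowski's microlocal spectrum condition: $\mathrm{WF}(\omega_2)\subset\{(\x,k_\x;\y,-k_\y):(\x,k_\x)\sim(\y,k_\y),\ k_\x\in\overline V_+\}$ with $k_\x,k_\y$ null and nonzero, so partial smearing in $\y$ against a $C_0^\infty$ test function projects out the $\y$-directions and leaves empty wavefront set in $\x$.

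\textbf{Main obstacle.} The algebraic step is routine once we use the CCR and $\widehat\Phi^-\Omega=0$. The actual content is Step~2, which could in principle be delicate because $\omega_2$ is only a distribution and we want pointwise smoothness in one variable. I view this as the ``hard step,'' though it is standard and, as indicated above, essentially immediate from either the Kay--Wald Hadamard parametrix decomposition or the microlocal spectrum condition. A minor technical point to attend to is that the Hadamard decomposition of Definition~\ref{Def:Hadamard} is stated in a causal normal neighborhood of a single Cauchy surface; to conclude smoothness on all of $M$, one exploits the fact that such neighborhoods can be chosen containing any prescribed compact set, and combines this with the fact that $\omega_2$ is smooth on the complement of the causal-related pairs.
\end{proofsketch}
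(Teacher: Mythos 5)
Your proposal is correct and follows essentially the same route as the paper: use the CCR to reduce the commutator to a c-number multiple of the identity equal (up to sign and argument order) to the two-point function of the Hadamard state $\omega$, and then conclude smoothness of the partially smeared distribution $\x \mapsto \omega_2(\x,f)$ from the Hadamard form of $\omega_2$. The only difference lies in how the smoothness of the partially smeared singular part is discharged: the paper simply cites the explicit computation in appendix B of \cite{Kay:1988mu}, whereas you sketch that computation and alternatively invoke the microlocal spectrum condition (no elements of ${\rm WF}(\omega_2)$ with vanishing covector in the smeared slot, hence empty wavefront set after smearing), which is an equally valid route given Radzikowski's equivalence.
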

\begin{proof}
We shall complete the proof for $\left[\widehat{\Phi}^+(\x), \widehat{\Phi}^-(f)\right]$. The other case is analogous. That $\left[\widehat{\Phi}^+(\x), \widehat{\Phi}^-(f)\right]$ is a c-function times the identity is immediate because $\left[\widehat{\Phi}^+(g), \widehat{\Phi}^-(f) \right] = -\langle KEf, KEg \rangle_{\cal H} \widehat{\I}$, where $\langle \, , \, \rangle_{\cal H}$ denotes the one-particle Hilbert space inner product. (See e.g. \cite[app A]{Wald:1995yp} for details.) Thus, $\left[ \widehat{\Phi}^+(g), \widehat{\Phi}^-(f)\right] = \langle \Omega | \left[\widehat{\Phi}^+(g), \widehat{\Phi}^-(f)\right] \Omega \rangle \widehat{\I}$ and we have the normal ordering prescription
\begin{equation}
\widehat{\Phi}(g) \widehat{\Phi}(f) = \,\, : \widehat{\Phi}(g) \widehat{\Phi}(f) : + \langle \Omega | \left[\widehat{\Phi}^+(g), \widehat{\Phi}^-(f)\right] \Omega \rangle \widehat{\I},
\end{equation}
from where it follows that, 
\begin{align}
\left[\widehat{\Phi}^+(g), \widehat{\Phi}^-(f)\right] & = \left\{\lim_{\epsilon \to 0^+} \int_{M \times M} \!\!\!\!\!\!\!\!\!\!\!\! d\vol(\x) \, d\vol(\y) \, g(\x) f(\y) \, \right. \nonumber \\
& \left. \times \left[\chi(\x, \y) H_\epsilon^{T, n}(\x,\y) + W^n(\x,\y) \right]\right\} \widehat{\I},
\label{eqLem1}
\end{align}
where the right-hand side is as in def. \ref{Def:Hadamard}, {\it cf.} eq. \eqref{Had1}. As discussed in sec. \ref{sec:Hadamard}, below def. \ref{Def:Hadamard}, the $W^n$ can be seen to yield a smooth contribution and, hence, the proof is completed because
\begin{align}
S(\x) & = \frac{1}{(2 \pi)^2} \lim_{\epsilon \to 0^+} \int_{M} \! d\vol(\x) \, d\vol(\y) \, f(\y) \, \chi(\x, \y)  \nonumber \\
& \times \left(\frac{\Delta^{1/2}(\x, \y)}{\sigma_\epsilon(\x, \y)} + V^{(n)} \ln[\sigma_\epsilon(\x, \y) ] \right),
\end{align}
with the logarithm branch cut along the negative real axis, defines a smooth function on $M$, as has been shown in \cite[app B]{Kay:1988mu}.
\end{proof}

\subsection{A lemma for nested commutators}
\label{app:A2}

\begin{lemma}
Let $(\mathscr{L}, [ \, , \, ])$ be a Lie algebra. For $X, Y, Z \in \mathscr{L}$ and the adjoint notation ${\rm ad}_X^n Y$ defined as in eq. \eqref{Nest}, the following identity holds for all $n \in \mathbb{N}$:
\begin{align}
\left[ Y, \ad^n_X Z \right] & = \sum_{m = 0}^{n-1} \ad_X^{m} \Big( \left[ [Y,X], \ad_X^{n-m-1} Z \right] \Big) \nonumber \\
& + \ad_X^n ([Y,Z]).
\label{InductiveComm}
\end{align}

\label{LemmaInductiveComm}
\end{lemma}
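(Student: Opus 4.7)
The plan is a straightforward induction on $n$, with the Jacobi identity doing essentially all of the work. Let me denote by $P(n)$ the asserted identity for a given $n$.

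For the base case $n=0$, the sum on the right-hand side is empty and $\ad_X^0$ is the identity map, so both sides reduce to $[Y,Z]$. (One may also check $n=1$ directly as a sanity test: there the identity reads $[Y,[X,Z]] = [[Y,X],Z] + [X,[Y,Z]]$, which is just a rearrangement of Jacobi.)

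For the inductive step, assuming $P(n)$, I would rewrite
\begin{equation*}
[Y, \ad_X^{n+1} Z] \;=\; [Y,\,[X,\ad_X^n Z]]
\end{equation*}
and apply the Jacobi identity in the form $[Y,[X,W]] = [X,[Y,W]] + [[Y,X],W]$ with $W = \ad_X^n Z$. This yields
\begin{equation*}
[Y, \ad_X^{n+1} Z] \;=\; \ad_X\bigl([Y,\ad_X^n Z]\bigr) \;+\; \bigl[[Y,X], \ad_X^n Z\bigr].
\end{equation*}
Now plug the inductive hypothesis $P(n)$ into the first term on the right, pull the outer $\ad_X$ through the finite sum, and re-index $m \mapsto m-1$ so that the new summation runs from $m=1$ to $m=n$. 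The second term $[[Y,X],\ad_X^n Z]$ is precisely the missing $m=0$ term of the sum required by $P(n+1)$, and the $\ad_X$ pulled into $\ad_X^n([Y,Z])$ upgrades it to $\ad_X^{n+1}([Y,Z])$. Combining these yields exactly the right-hand side of $P(n+1)$.

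There is no real obstacle here: the only structural facts used are bilinearity of the bracket and the Jacobi identity, so the result holds in any Lie algebra. The one bookkeeping point to be careful about is the index shift when pulling $\ad_X$ through $\sum_{m=0}^{n-1}\ad_X^{m}(\,\cdot\,)$, and matching the resulting $m=0$ term with the separately-produced $[[Y,X],\ad_X^n Z]$ so that the sum in $P(n+1)$ runs cleanly from $m=0$ to $m=n$.
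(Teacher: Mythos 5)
Your proof is correct and follows essentially the same route as the paper's: induction on $n$, with the Jacobi identity applied to $[Y,[X,\ad_X^n Z]]$, the inductive hypothesis substituted, and the index shift absorbing $[[Y,X],\ad_X^n Z]$ as the $m=0$ term. The only cosmetic difference is that you anchor the induction at $n=0$ (empty sum) rather than at $n=1$ as the paper does, which changes nothing of substance.
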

\begin{proof}
We proceed by induction:

(i) For $n = 1$, the formula holds by Jacobi's identity.

(ii) We assume that \eqref{InductiveComm} holds for fixed $n$. For $n+1$,
\begin{align}
\left[ Y, \ad^{n+1}_X Z \right] & = \left[ Y, \left[X, \ad^{n}_X Z \right] \right] \nonumber \\
& = \left[ [Y, X] , \ad^{n}_X Z \right] + \left[ X, \left[Y, \ad^{n}_X Z \right] \right],
\end{align}
where in the second equality we used Jacobi's identity and by our hypothesis
\begin{widetext}
\begin{align}
 \left[ Y, \ad^{n+1}_X Z \right]  & = \left[ [Y, X] , \ad^{n}_X Z \right] + \left[ X, \left(\sum_{m = 0}^{n-1} \ad_X^{m} \Big( \left[ [Y,X], \ad_X^{n-m-1} Z \right] \Big) + \ad_X^n ([Y,Z]) \right)\right] \nonumber \\
& = \left[ [Y, X] , \ad^{n}_X Z \right] + \sum_{m = 0}^{n-1} \ad_X^{m+1} \Big( \left[ [Y,X], \ad_X^{n-m-1} Z \right] \Big) + \ad_X^{n+1} ([Y,Z]) \nonumber \\
& = \left[ [Y, X] , \ad^{n}_X Z \right] + \sum_{k = 1}^{(n+1)-1} \ad_X^{k} \Big( \left[ [Y,X], \ad_X^{(n+1)-k-1} Z \right] \Big) + \ad_X^{n+1} ([Y,Z])  \nonumber \\
& = \sum_{k = 0}^{(n+1)-1} \ad_X^{k} \Big( \left[ [Y,X], \ad_X^{(n+1)-k-1} Z \right] \Big) + \ad_X^{n+1} ([Y,Z]) ,
\end{align}
\end{widetext}
which completes the inductive step.

\end{proof}


\bibliography{HDRbib}
\end{document}